\newtheorem{theorem}[]{Theorem}[section]
\newtheorem{claim}[theorem]{Claim}
\newtheorem{corollary}[theorem]{Corollary}
\newtheorem{remark}[theorem]{Remark}
\newtheorem{example}[theorem]{Example}
\newtheorem*{example*}{Example}
\newcommand{\cc}[1]{\mathcal{#1}} 
\newcommand{\SF}[1]{\mathsf{#1}}
\newcommand{\Z}{\mathbb{Z}}
\newcommand{\Id}{\mathsf{Id}}
\newcommand{\Neg}{\mathsf{Neg}}
\newcommand{\true}{\mathrm{T}}
\newcommand{\false}{\mathrm{F}}
\renewcommand{\P}{\mathrm{P}}
\newcommand{\NP}{\mathrm{NP}}
\newcommand{\comp}{\mathrm{comp}}
\newcommand{\yes}{\mbox{\rmfamily\textsc{yes}}}
\newcommand{\no}{\mbox{\rmfamily\textsc{no}}}
\newcommand{\exc}{\mbox{\rmfamily\textsc{exception}}}
\newcommand{\ALPHA}[1]{\llbracket #1 \rrbracket}
\newcommand {\ie} {\textit{i.e.}\xspace}
\newcommand {\st} {\textit{s.t.}\xspace}
\newcommand{\E}[1]{\mbox{\rmfamily\textsc{E--#1}}}
\newcommand{\X}[1]{\mbox{\rmfamily\textsc{X--#1}}}
\newcommand{\SP}{\mbox{\rmfamily\textsc{S}}}
\newcommand{\HH}[1]{\mbox{\rmfamily\textsc{H--#1}}}
\newcommand{\CSP}{\mathrm{CSP}}
\newcommand{\KWEIGHT}{k\SF{WEIGHT}}
\newcommand{\HCSP}{\HH{$\CSP$}}
\newcommand{\GROUPEQ}{\SF{GROUPEQ}}
\newcommand{\ENT}{\SF{ENTRIES}}
\newcommand{\GROUP}{\SF{GROUP}}
\newcommand{\LINEQ}{\SF{LINEQ}}
\newcommand{\SAT}[1]{\mathsf{#1SAT}}
\newcommand{\UG}[1]{\mathsf{UG}[#1]}
\newcommand{\COL}[1]{\mathsf{#1COL}}
\newcommand{\CISO}[1]{#1\mathsf{CLQ\textsc{--}ISO}}
\newcommand{\CLQ}[1]{#1\mathsf{CLQ}}
\newcommand{\EQ}[1]{\mathsf{EQ}_{#1}}
\newcommand{\TWOCOL}{\COL{2}}
\newcommand{\newsat}{\mathsf{MONSAT}}
\newcommand{\CR}{\{\cc{R}\}}
\newcommand{\CV}{\{\cc{V}\}}
\newcommand{\onlyC}{\emptyset}
\newcommand{\V}{\mathsf{V}}
\newcommand{\prom}{\mathsf{PROM}}
\newcommand{\rf}{\mathsf{RF}}
\definecolor{edcolor}{rgb}{0,0.8,0.3}
\begin{document}

\title{On the complexity of trial and error for constraint satisfaction 
problems\footnote{A preliminary report on this work appeared in ICALP 14 as
\cite{conf_version}.}
}

\author{
G\'abor Ivanyos\thanks{Institute for Computer Science and Control, Hungarian Academy of Sciences, 
Budapest, Hungary 
({\tt Gabor.Ivanyos@sztaki.mta.hu}).} 
\and 
Raghav Kulkarni\thanks{Centre for Quantum Technologies,
 National University of Singapore, Singapore 117543 ({\tt kulraghav@gmail.com}).}
\and
 Youming Qiao\thanks{Centre for Quantum Software and Information, 
 University of Technology Sydney, Australia 
 ({\tt jimmyqiao86@gmail.com}).}
\and
 Miklos Santha\thanks{IRIF, CNRS, Universit\'e Paris Diderot, 75205 Paris, 
 France;  and 
Centre for Quantum Technologies, National University of Singapore, 
Singapore 117543 ({\tt santha@irif.fr}).}
\and 
Aarthi Sundaram\thanks{Centre for Quantum Technologies,
 National University of Singapore, Singapore 117543 
 ({\tt aarthims@u.nus.edu}).}
}
\date{}
\maketitle

\begin{abstract}
In 2013  Bei, Chen and Zhang introduced a trial and error model of computing, and applied to some constraint satisfaction problems. In this model the input is hidden by an oracle which, for a candidate assignment, reveals some information about a violated constraint if the assignment is not satisfying. 
In this paper we initiate a {\em systematic} study of constraint satisfaction problems in the trial and error model. To achieve this, we first adopt a formal framework for CSPs, and based on this framework we define several types of revealing oracles. Our main contribution is to develop a \emph{transfer theorem} for each type of the revealing oracle, under a broad class of parameters. To any hidden CSP with a specific type of revealing oracle, the transfer theorem associates another, potentially harder CSP in the normal setting, such that their complexities are polynomial time equivalent. This in principle transfers the study of a large class of hidden CSPs, possibly with a promise on the instances, to the study of CSPs in the normal setting. We then apply the transfer theorems to get polynomial-time algorithms or hardness results for hidden CSPs, including satisfaction problems, monotone graph properties, isomorphism problems, and the exact version of the Unique Games problem. Most of the proofs of these results are short and straightforward, which exhibits the power of the transfer theorems. 
\end{abstract}


 


\thispagestyle{empty}

\newpage

\setcounter{page}{0}
\pagenumbering{arabic}

\section{Introduction}
\label{sec:intro}

%

In \cite{BCZ13}, Bei, Chen and Zhang proposed a \emph{trial and error} model to study algorithmic problems when some input information is lacking. As argued in their paper, the lack of input information can happen when we have only limited knowledge
of, and access to the problem. They also described several realistic scenarios where the inputs are actually unknown. 
%
Then, they formalized this methodology in the complexity-theoretic setting, and 
proposed a trial and error model 
for constraint satisfaction problems. They further applied this idea to investigate the information needed to solve linear programming in \cite{BCZ13b}, and to study information diffusion in a social network in \cite{BCD+13}.


As mentioned, in \cite{BCZ13} the authors focused on the hidden versions of some specific
constraint satisfaction problems ($\HH{$\CSP$}$s), whose instances could only be accessed via a \emph{revealing} oracle.
An algorithm in this setting interacts with this revealing oracle to get information about the input instance. 
Each time, the algorithm proposes a candidate solution, a {\em trial},
and the validity of this trial is checked by the oracle. If the trial succeeds, the algorithm is notified that the proposed trial is already a solution. Otherwise, the algorithm obtains as an {\em
error}, a violation of some property
corresponding to the instance. The algorithm aims to make effective use
of these errors to propose new trials, and the goal is to minimize the number of trials while keeping in mind the cost for proposing new trials.
When the $\CSP$ is already difficult, a computation oracle that solves the 
original problem might be allowed. Its use is justified as we are interested in 
the \emph{extra difficulty} caused by the lack of information. Bei, Chen and 
Zhang considered several natural $\CSP$s in the trial and error setting, 
including $\SAT{}$, Stable Matching, Graph Isomorphism and Group Isomorphism. 
While the former two problems in the hidden setting are shown to be of the same 
difficulty as in the normal one, the last two cases have substantially increased 
complexities in the unknown-input model. They also studied more problems, as 
well as various aspects of this model, like the query complexity.

In this paper, following \cite{BCZ13}, we initiate a {\em systematic} study of 
the constraint satisfaction problems in the trial and error model. To achieve 
this, we first adopt a formal framework for $\CSP$s, and based on this 
framework we define three types of revealing oracles. 
This framework also helps to clarify and
enrich the model of \cite{BCZ13}. 
Let us make a quick remark that, our $\CSP$ model has a couple of features that may not 
be quite standard. We will mention some of these in the following, and discuss 
these in detail in  Section~\ref{subsec:std}. 

Our main contribution is to develop a 
\emph{transfer theorem} for each type of the revealing oracle, under a broad 
class of parameters. For any hidden $\CSP$ with a specific type of revealing 
oracle, the transfer theorem associates another $\CSP$ in the normal (unhidden) 
setting, such that their difficulties are roughly the same. This in principle 
transfers the study of hidden $\CSP$s to the study of $\CSP$s in the normal 
setting. We also apply transfer theorems to get results for concrete $\CSP$s, 
including some problems considered in \cite{BCZ13}, for which we usually get 
much shorter and easier proofs. 

\paragraph{The framework for $\CSP$s, and hidden $\CSP$s.} To state our results 
we describe informally the framework of $\CSP$s. A CSP $\SP$ is defined by a 
finite alphabet $\ALPHA{w}=\{0, 1, \dots, w-1\}$ and by
$\cc{R}=\{R_1, \dots, R_s\}$, a set of
relations over $\ALPHA{w}$ of some fixed arity $q$.
For a set of variables $\cc{V}=\{x_1, \dots, x_\ell\}$, an instance of 
$\SP$ is a set of constraints $\cc{C} = \{C_1, \dots, C_m\}$, where $ C_j = R(x_{j_1}, \ldots , x_{j_q})$ for some
relation $R \in \cc{R}$ and some $q$-tuple of variables. An assignment $a\in\ALPHA{w}^{\ell}$ satisfies 
$\cc{C}$ if it satisfies every constraint in it.

\begin{example}\label{ex:1sat} 
{\rm $\SAT{1}$: Here $w=2, \; q = 1,$ and 
$\cc{R} = \{\Id , \Neg \},$ where $\Id=\{1\}$ is the identity relation, and $\Neg =\{0\}$ is its complement. 
Thus a constraint is a literal $x_i$ or $\bar{x}_i$, and an instance is just a
collection of literals. In case of {\rm $\SAT{3}$} the parameters are $w=2, \; q = 3$ and $|\cc{R}| = 8$. 
We will keep for further illustrations $\SAT{1}$ which is a problem in polynomial time. {\rm $\SAT{3}$} would be a less
illustrative example since the standard problem is already $\NP$-complete}.

\end{example}

To allow for more versatility, we may only be interested in those assignments
satisfying certain additional conditions that cannot be (easily) expressed in the 
framework of constraint satisfaction problems.
This case happens, say when we look
for permutations in isomorphism problems, or when we view monotone graph 
properties
as $\CSP$ problems in Section~\ref{sec:onlyC}.
To cover these cases, our model will also include
a subset $W\subseteq \ALPHA{w}^\ell$ as a parameter
and we will look for satisfying assignments from $W$, whose elements
will be refereed to as {\em admissible assignments}. That these
admissible assignments can play
a notable role (as in Section~\ref{sec:onlyC}) is a first feature that may be
somewhat surprising.


Recall that in the hidden setting, the algorithm interacts with some revealing oracle by 
repeatedly proposing assignments.  If the proposed assignment 
is not satisfying then the revealing oracle discloses
certain information about some violated constraint. 
This can be in principle an index of 
such a constraint, (the index of) the relation in it, the indices of the variables where this relation is applied,
or any subset of the above. Here we will require that the oracle always
reveals the index of 
a violated constraint
from $\cc{C}$. 
To characterize the choices for the additional information,
for any subset $\cc{U} \subseteq \{\cc{R}, \cc{V}\}$ we say that an oracle is $\cc{U}$-{\em revealing}
if it also  gives out
the information corresponding to $\cc{U}$. For a $\CSP$ problem $\SP$ we use $\HH{$\SP$}_{\cc{U}}$ to denote the corresponding hidden problem in the trial and error model with $\cc{U}$-revealing oracle.

\vskip 0.3em
\noindent {\bf Example~\ref{ex:1sat} continued.} 
Let us suppose that we present an assignment $a \in \{0,1\}^{\ell}$ for an
instance of  the hidden version $\HH{$\SAT{1}$}_{\cc{U}}$ of {\rm $\SAT{1}$ to the $\cc{U}$-revealing oracle. If\ \ $\cc{U} = \{\cc{V}\}$
and the oracle reveals $j$ and $i$  respectively for the violated constraint and the variable in it
then we learn that the $j$th literal is $x_i$ if $a_i = 0$, and $\bar{x}_i$ otherwise.
If $\cc{U} = \{\cc{R}\}$ and say the oracle reveals $j$ and $\Id$ then we learn that the $j$th literal is positive.
If $\cc{U} = \emptyset$ and the oracle reveals $j$ then we only learn that the $j$th literal is either a positive literal corresponding to one of
the indices where $a$ is $0$, or a negative literal corresponding to an index where $a$ is $1$.

\vskip 0.3em



In order to explain the transfer theorem and motivate the operations which 
create richer CSPs, we  first make a simple observation that $\HH{S}_{\{\cc{R}, 
\cc{V}\}}$  and $\SP$
are polynomial time equivalent, when the relations of $\SP$ are in $\P$. Indeed, an algorithm for $\HH{S}_{\{\cc{R}, \cc{V}\}}$
can solve $\SP$, as the answers of the oracle can be given by directly 
checking if the
proposed assignment is satisfying.
In the other direction, 
we repeatedly submit assignments to the oracle. 
The answer of the oracle fully reveals a (violated) constraint. Given some subset
of constraints we already know, to find a new constraint, we submit an assignment which satisfies all the known constraints.
Such an assignment can be found by the algorithm for $\SP$.

With a weaker oracle this procedure clearly does not work and to compensate, we need stronger $\CSP$s. 
In the case of $\{\cc{V}\}$-revealing oracles an answer helps us
exclude, for the specified clause, all those relations which were satisfied 
at the specified indices of the proposed assignment,
but keep as possibilities all the relations which were
violated 
 at those indices. Therefore, to find out more information about the input, we would like to find a
satisfying assignment for a CSP instance whose corresponding
constraint is the union of the violated 
relations. 
This naturally brings us to consider the 
constraint satisfaction problem $\bigcup\SP$, the
{\em closure by union}
of $\SP$. The relations for $\bigcup\SP$
are from $\bigcup \cc{R}$, the \emph{closure by union} of $\cc{R}$, 
which contains
relations by taking union over any subset of $\cc{R}$.

The situation with the $\{\cc{R}\}$-revealing oracle is analogous, but here we have to compensate, in the 
stronger CSP, for the lack of revealed information about the variable indices.
For a relation $R$ and $q$-tuple of indices $(j_1, \ldots , j_q)$, we define the $\ell$-ary relation
$R^{(j_1, \ldots , j_q)} = \{a \in W ~:~ (a_{j_1}, \ldots , a_{j_q}) \in R\},$ 
and for a set $I$ of $q$-tuples of indices, we set
$R^I = \bigcup_{(j_1, \ldots , j_q) \in I} R^{(j_1, \ldots , j_q)} $.
The {\em arity extension} of
$\SP$ is the constraint satisfaction problem \E{S}  whose relations are from arity extension
\E{$\cc{R}$} $= \bigcup_{I}  \{R^I ~:~  R \in \cc{R} \}$ of $\cc{R}$.
Note that the arity extension produces relations whose 
arities are as large as the assignment length. This requires us to consider 
$\CSP$s where the arities of relations can be functions in e.g. the assignment 
length. While 
such $\CSP$s include some natural instances like systems of linear equalities, 
this feature may also be unfamiliar to some readers. 

The transfer theorem first says that with $\bigcup\SP$ (resp. \E{S}) we can compensate the information hidden
by  a $\{\cc{V}\}$-revealing (resp. $\{\cc{R}\}$-revealing) oracle, that is we 
can solve 
$\HH{S}_{\{\cc{V}\}}$ (resp. $\HH{S}_{\{\cc{R}\}}$). In fact, with $\bigcup \E{S}$ we can solve
$\HH{S}_{\emptyset}$. Moreover, perhaps more surprisingly, it says that these statements also hold in the reverse direction:
if we can solve the hidden CSP, we can also solve the corresponding extended CSP.

\vskip 0.3em
\noindent {\bf Transfer Theorem (informal statement)} {\em 
Let $\SP$ be a CSP whose parameters are ``reasonable" and whose relations are in $\P$. 
Then for any promise $W$ on the assignments, the complexities of the following problems are polynomial time equivalent:
(a) $\HH{S}_{\{\cc{V}\}}$ and $\bigcup \SP$, 
(b) $\HH{S}_{\{\cc{R}\}}$  and  \E{S},  
(c) $\HH{S}_{\emptyset}$ and $\bigcup  \E{S}$. }
\vskip 0.3em
The precise dependence on the parameters can be found in the theorems of 
Section~\ref{sec:reductions}. Corollary~\ref{equivalence} highlights the 
conditions for polynomial equivalence.

\vskip 0.3em
\noindent {\bf Example~\ref{ex:1sat} continued.} 
Since $\bigcup \{\Id, \Neg\} = \{ \emptyset, \Id, \Neg, \{0,1\}\}$, 
$\bigcup\SAT{1}$ has only the two trivial (always false or always true) relations in addition to the relations in $\SAT{1}$.
Therefore it can be solved in polynomial time, and by the the Transfer Theorem 
$\HH{$\SAT{1}$}_{\{\cc{V}\}}$ is also in $\P$.
On the other hand, for any index set $I\subseteq [\ell]$, $\Id^I$ is a disjunct of positive literals with variables from $I$,
and similarly $\Neg^I$ is a disjunct of negative literals with variables from $I$. Thus $\E{$\SAT{1}$}$ 
includes $\newsat$, which consists of those instances of $\SAT{}$ where in each clause either every variable is positive,
or every variable is negated. The problem $\newsat$ is $\NP$-hard by Schaefer's characterization~\cite{Sch78},
and therefore the Transfer Theorem implies that 
$\HH{$\SAT{1}$}_{\{\cc{R}\}}$ and $\HH{$\SAT{1}$}_{\emptyset}$ are also $\NP$-hard.
\vskip 0.3em

In a further generalization, we will 
also consider 
CSPs and $\HCSP$s whose 
instances satisfy some property. 
One such property can 
be  {\em repetition freeness}  meaning
that the constraints of an instance are pairwise distinct.  
The promise $\HCSP$s could also be a suitable framework
for discussing certain  graph problems on
special classes of graphs. For a promise $\prom$ on instances of $\SP$ we denote by $\SP^{\prom}$ the promise problem
whose instances are instances of $\SP$ satisfying $\prom$. The problem
$\HH{S}_{\{\cc{U}\}}^{\prom}$ is defined in an analogous way from $\HH{S}_{\{\cc{U}\}}$.


It turns out that we can generalize the Transfer Theorem for CSPs with promises on the instances.
We describe this in broad lines for the case of ${\{\cc{V}\}}$-revealing oracles.
Given a promise $\prom$ on $\SP$, the corresponding promise $\bigcup \prom$ for $\bigcup\SP$ 
is defined in a natural way.
We say that a $\bigcup\SP$-instance $\cc{C}'$  {\em includes} an $\SP$-instance $\cc{C}$ 
if for every $j\in[m]$, the constraint $C'_j$ in $\cc{C}'$ and the constraint $C_j$ in $\cc{C}$ 
are defined on the same variables, and seen as relations, $C_j \subseteq C'_j$.
Then $\bigcup \prom$ is the set of instances $\cc{C'}$ of $\bigcup \SP$  which 
include some $\cc{C} \in \prom$. The concept of an algorithm {\em solving} $\bigcup\SP^{\bigcup \prom}$
has to be relaxed: 
while we search for a satisfying assignment for those instances which include a satisfiable instance of $\prom$,
when this is not the case, the algorithm can abort even if the instance is satisfiable.
With this we have:

\vskip 0.3em
\noindent {\bf Transfer Theorem for promise problems (informal statement)} {\em 
Let $\SP$ be a constraint satisfaction problem with promise $\prom$. Then the complexities of 
$\HH{S}_{\{\cc{V}\}}^{\prom}$ and $\bigcup S^{\bigcup \prom}$ are polynomial time equivalent
when the parameters are ``reasonable" and the relations of $\SP$  are in $\P$}.

\vskip 0.3em
\noindent {\bf Example~\ref{ex:1sat} continued.} Let $\rf$ denote the property of being repetition free,
in the case of $\SAT{1}$ this just means that no literal can appear twice in the formula.
Then $\HH{$\SAT{1}$}_{\emptyset}^{\rf}$, hidden repetition-free $\SAT{1}$ with $\emptyset$-revealing oracle, 
is solved in polynomial time. 
To see this we first consider $\X{$\SAT{1}$}$, the constraint satisfaction problem whose relations are
all $\ell$-ary extensions of $\Id$ and $\Neg$. (See Section~\ref{sec:prelims} 
for a formal definition.)
It is quite easy to see that hidden $\SAT{1}$
with $\emptyset$-revealing oracle is essentially the same problem as hidden $\X{$\SAT{1}$}$
with $\{\cc{V}\}$-revealing oracle. Therefore, by the Transfer Theorem we are concerned with
$\bigcup \X{$\SAT{1}$}$ with promise $\bigcup \rf$. The instances satisfying the promise are
$\{C_1, \dots, C_m\}$, where $C_j$ is a disjunction of literals such that there exist distinct 
literals $z_1, \dots, z_m$, with $z_j\in C_j$. It turns out that these specific instances of $\SAT{}$ can be solved
in polynomial time. The basic idea is that 
we can apply a maximum matching algorithm, and only output a solution if we can select $m$ pairwise 
different variables $x_{i_1}, \dots, x_{i_m}$ such that either $x_{i_j}$ or $\overline{x}_{i_j}$ is in $C_j$. 

\vskip 0.3em

\noindent{\bf Applications of transfer theorems.}
Since $\NP$-hard problems obviously remain $\NP$-hard in the hidden setting (without access to an $\NP$ oracle), we investigate the complexity of various polynomial-time solvable $\CSP$s.
We first apply the Transfer Theorem when there is no promise on the instances. We 
categorize the hidden CSPs depending on the type of the revealing oracle.

With constraint and variable index revealing oracles, we obtain results on several interesting families of $\CSP$s including the exact-Unique Games Problem (cf. Section~\ref{sec:CV}), equality to a
member of a fixed class of graphs.
Interestingly, certain
$\CSP$s, like $\SAT{2}$ and the exact-Unique Game problem on alphabet size
$2$ remain in $P$,
while
some other $\CSP$s 
like the exact-Unique Game problem on alphabet size $\geq 3$, 
and equality to some specific graph, such as $k$-cliques,
become $\NP$-hard in this hidden input setting.
The latter problem is just the Graph Isomorphism problem considered in 
\cite[Theorem 13]{BCZ12}, whose proof, with the help of the Transfer Theorem, becomes very simple. 

With constraint and relation index revealing oracles, we show that if the arity 
and the alphabet size are constant, any $\CSP$ satisfying certain modest 
requirement becomes $\NP$-hard. To be specific, we require that 
that for every element $\alpha$ of the alphabet, the collection 
$\cc{R}$ contains a relation which is {\em violated} by
the tuple $(\alpha,\ldots,\alpha)$. This can be justified
by observing that otherwise it would be easy to find a satisfying 
assignment for any instance using $O(w)$ trials.

We then study various monotone graph 
properties like Spanning Tree, Cycle Cover, 
etc.. We define a general framework to represent variants of monotone graph 
property problems 
as $\HH{$\CSP$}$s. Since in this framework only one relation is present, the 
relation index is not a concern. We deal with the constraint index revealing 
oracle, which is equivalent to the constraint and relation index revealing oracle 
in this case, and prove that the problems become $\NP$-hard. 
This framework also naturally 
extends to directed graphs.


Finally, we investigate hidden CSPs with promises on the instances. We first 
consider the repetition freeness promise, as exhibited by the $\SAT{1}$ example as 
above. Though the hidden repetition free $\SAT{1}$ problem becomes solvable in 
polynomial time, $\SAT{2}$ is still $\NP$-hard. The group isomorphism problem can 
also be cast in this framework, and we give a simplified proof of \cite[Theorem 
11]{BCZ13}: to compute an explicit isomorphism of the hidden group with $\Z_p$ is 
$\NP$-hard. 

\paragraph{Comparisons with \cite{BCZ13}.} We now compare our framework and 
results with those in \cite{BCZ13} explicitly. Recall that we defined three 
revealing oracles, $\emptyset$-, $\{\cc{V}\}$-, and $\{\cc{R}\}$-revealing 
oracles. The 
$\emptyset$-revealing oracle was the original setting discussed in \cite{BCZ13}. 
The 
$\{\cc{V}\}$- and $\{\cc{R}\}$-revealing oracles are new, so are the results about 
specific 
$\CSP$s in the setting of these two oracles. For the $\emptyset$-revealing oracle, 
both \cite{BCZ13} and this paper discussed SAT and
Group Isomorphism. Isomorphisms of two graphs and isomorphisms
of a graph with a clique were studied in the report
\cite{BCZ12}. 
Here we prove a hardness result for the latter problem.
The paper \cite{BCZ13} further considered several other problems including 
stable matching, and Nash Equilibrium. On the other hand the monotone graph 
properties (Section~\ref{sec:onlyC}) and certain promise problems 
(Section~\ref{sec:promise}) are studied only in this paper.

Bei, Chen and Zhang also gave bounds 
on the {\em trial complexity} of some of the 
problems considered in \cite{BCZ13}, including stable matching and SAT. 
(The trial complexity measures the number of
oracle calls to solve them in the hidden model).
Although the algorithms
outlined in the proofs for our transfer theorems provide generic upper bounds 
on the trial complexity, giving tighter bounds would be beyond the focus of 
the present paper.






\vskip 0.3em
\noindent{\bf Organization.}
In Section~\ref{sec:prelims} we formally describe the model of $\CSP$s, and hidden $\CSP$s. 
In Section~\ref{sec:reductions}, the transfer theorems are stated and proved. 
Section~\ref{sec:CV}, \ref{sec:CR}, and \ref{sec:onlyC} contain the 
applications of the main theorems in the
case of $\CV$-revealing oracle, $\CR$-revealing oracle, and monotone graph 
properties,  
respectively. Finally in Section~\ref{sec:promise} we present the results for 
hidden promise CSPs.

\section{Preliminaries}\label{sec:prelims}

\subsection{The model of constraint satisfaction problems}\label{subsec:csp_model}

For a positive integer $k$, let $[k]$ denote the set $\{1, \ldots, k\}$, and let 
$\ALPHA{k}=\{0, 1, \dots, k-1\}$.
A {\em constraint satisfaction problem}, ($\CSP$) $\SP$, is specified by its set of parameters and its type, both
defined for every positive integer $n$.

\paragraph{The parameters.} The {\em parameters} are the {alphabet size} $w(n)$, 
the assignment length $\ell(n)$,
the set of (admissible) {assignments} $W(n) \subseteq \ALPHA{w(n)}^{\ell(n)}$, the {arity} $q(n)$,
and the {number of relations}
$s(n)$. 
To simplify notations, we often omit $n$ from the parameters, and just write 
$w, \ell, W, q$ and $s$. 

We suppose that the parameters, as functions of 
$n$, can be computed in time polynomial in $n$. 
In many cases (like in classical 
CSPs) $n$ coincides with $\ell$,
the assignment length but for e.g. (monotone) graph properties
the $n$ is the number of vertices while the assignment length is
$\binom{n}{2}$, the number of possible edges.

\paragraph{The type.} For a sequence $J=(j_1,\ldots,j_q)$ of $q$ distinct indices 
we
denote $W_J$ the projection of $W$ to the coordinates from $J$:
$W_J=\{(v_1,\ldots,v_q)\in [w]^q:\exists (w_1,\ldots,w_\ell)\in W\mbox{~with~}
w_{j_i}=v_i\}$. We suppose that $W_J$ does not depend on the choice of $J$,
that is, for every $J$ consisting of $q$ distinct indices we have
$W_J=W_q := \{u \in \ALPHA{w}^q ~:~ uv \in W ~ \mbox{{\rm for some}} ~ v \in 
\ALPHA{w}^{\ell - q}\}$. 
This condition holds 
trivially for most cases, and for other cases (e.g. $\CSP$s related to graphs), 
holds due certain symmetry condition there (e.g. graph properties are invariant 
for 
isomorphic graphs). 
A $q$-ary {\em relation} is $R \subseteq W_q$.
For $b$ in $W_q$, if $b \in R$, we sometimes write $R(b)=\true$, 
and similarly for $b \not \in R$ we write $R(b)=\false$. 
The {\em type} of $\SP$ is a set of $q$-ary relations $\cc{R}_n = \{R_1, \ldots, R_{s}\}$, where
$R_k \subseteq  W_q$, for every $k \in [s]$. As for the parameters, we usually just write $\cc{R}$.
Observe that the type of a $\CSP$ automatically defines among its parameters the 
arity and the number of relations.

We assume that the alphabet set and the relation set have succinct 
representations. Specifically, every letter and relation can be encoded by strings 
over $\{0, 1\}$ of length polynomial in $n$, and given such a string, we can 
decide whether it is a valid letter or relation efficiently. We also 
suppose the
existence of Turing
machines that, given $n$, words $b\in \ALPHA{w}^q$, $v\in \ALPHA{w}^\ell$,
$k \in [s]$, decide whether $v\in W$, whether $b\in W_q$ and compute
$R_k(b)$ if $b\in W_q$. 
We further introduce the following notations. For a relation $R$ let $\comp(R)$ be 
the time complexity of
deciding the membership of a tuple in $R$, and for a set of relations $\cc{R}$ let 
$\comp(\cc{R})$ be $\max_{R \in \cc{R}} \comp(R)$. We denote
by $\dim(\cc{R})$, the {\em dimension} of $\cc{R}$, which is
 the maximum of the integers $d$ such that there exists $R_1,\ldots,R_d\in\cc{R}$ with
$R_1\subsetneq R_2\subsetneq \ldots\subsetneq R_d$. In other words,
$\dim(\cc{R})$ is the length 
of the longest chain
of relations (for inclusion) in $\cc{R}$.

\paragraph{The instances.} We set $[\ell]^{(q)} = \{(j_1, \ldots, j_q) \in 
[\ell]^q ~:~ |\{j_1, \ldots, j_q\}| = q\}$, that is $[\ell]^{(q)}$ denotes 
the set of distinct $q$-tuples from $[\ell]$.
An {\em instance} of $\SP$ is given by
a set 
of $m$ 
constraints $\cc{C} = \{C_1, \ldots , C_m\}$ 
over a set $\cc{V}=\{x_1, \dots, x_{\ell}\}$ of {variables},
where the {\em constraint} $C_j$ is $R_{k_j}(x_{j_1}, \ldots , x_{j_q})$ for some $k_j \in [s]$ and $(j_1, \ldots , j_q) \in [\ell]^{(q)}$.
We say that an assignment $a \in W$ {satisfies} $C_j = R_{k_j}(x_{j_1}, \ldots , x_{j_q})$ if
$R_{k_j}(a_{j_1} , \ldots , a_{j_q}) = \true$. An assignment {\em satisfies} $\cc{C}$ if it satisfies all its constraints.
The {\em size} of an instance is $n + m(\log s + q \log \ell) + \ell \log w$ which includes the length of the description
of $\cc{C}$ and the length of the assignments. In all our applications the 
instance size will be polynomial in $n$.
A {\em solution} of $\cc{C}$ is a satisfying assignment if there exists any, 
and $\no$ otherwise.

Note that the size of an instance does not count 
the descriptions of the relations and the admissible assignments. This is because 
the latter information is thought of as the meta data of a $\CSP$, and is known to 
the algorithm. 

\paragraph{Operations creating new $\CSP$s from old $\CSP$s.} We also introduce 
two new operations which create richer sets of relations from a relation set.
For a given $\CSP$ $\SP$, 
these richer sets of relations derived from the type of $\SP$, will be the types of harder $\CSP$s which turn out to be
equivalent to
various hidden variants of $\SP$. The first operation is standard.
We denote
by $\bigcup \cc{R}$ the closure of $\cc{R}$ by the union operation, that is $\bigcup \cc{R} = \{ \bigcup_{R \in \cc{R}'} R ~:~ \cc{R}' \subseteq \cc{R}\}$.
We define the ({\em closure by}) {\em union} of $\SP$ as the constraint satisfaction problem
$\bigcup \SP$ whose type is $\bigcup \cc{R}$, and 
whose other parameters are the same as those of $\SP$.
We assume that a
relation $R$ in $\bigcup \cc{R}$ is represented a list of indices for relations 
from
$\cc{R}$ whose union is $R$.
We remark that $\dim (\bigcup\cc{R})\leq |\cc{R}|$.

For a relation $R \in \cc{R}$ and for $(j_1, \ldots , j_q) \in [\ell]^{(q)}$, we define the $\ell$-ary relation
$R^{(j_1, \ldots , j_q)} = \{a \in W ~:~ (a_{j_1}, \ldots , a_{j_q}) \in R\},$ 
and the {\em arity extension} of $\cc{R}$, as 
\X{$\cc{R}$} $= \{ R^{(j_1, \ldots , j_q)} ~:~ R \in \cc{R} \mbox{{\rm ~ and }} (j_1, \ldots , j_q) \in [\ell]^{(q)} \}.$
The set \X{$\cc{R}$} contains the natural extensions 
of relations in $\cc{R}$ to $\ell$-ary relations, 
where the extensions of the same relation are distinguished according to the 
choice of the $q$-tuple
where the assignment is evaluated.
The {\em arity extension} of
$\SP$ is the constraint satisfaction problem \X{S} whose type is \X{$\cc{R}$},
and whose other parameters are otherwise the same as those of $\SP$.
We assume that a
relation $R$ in \X{$\cc{R}$}  is represented by the index of a relation
in $\cc{R}$ and a sequence from $[\ell]^{(q)}$.

The combination of these two operations applied to $\cc{R}$ gives $\bigcup$ \X{$\cc{R}$},
the union of the arity extension of $\cc{R}$, which contains arbitrary unions of arity extended relations.
It will be useful to consider also restricting the unions to extensions coming from the same base relation.
For  $I \subseteq  [\ell]^{(q)}$, we set
$R^I = \bigcup_{(j_1, \ldots , j_q) \in I} R^{(j_1, \ldots , j_q)} $, 
and we define \E{$\cc{R}$} $= \{ R^I ~:~ R \in \cc{R} \mbox{{\rm ~ and  }} I \subseteq  [\ell]^{(q)} \}.$
Relations from $\bigcup$ \X{$\cc{R}$} are assumed to be represented as
a sequence of pairs, each consisting of an index of a relation in $\cc{R}$
and an element of $[\ell]^{(q)}$.

The {\em restricted union of arity extension} of
$\SP$ is the constraint satisfaction problem \E{S} whose type is \E{$\cc{R}$},
and whose other parameters are otherwise the same as those of $\SP$.
Observe that \E{$\cc{R}$} $\subseteq \bigcup$ \X{$\cc{R}$} $=  \bigcup$ \E{$\cc{R}$}. 




\subsection{Hidden $\CSP$ in the trial and error model} Suppose that we want to 
solve a $\CSP$ problem $\SP$ whose
parameters and type are known to us, but for the instance $\cc{C}$, we are explicitly given only 
$n$ and the number of constraints $m$. 
The instance is otherwise specified by a {\em revealing} oracle
$\V$ for $\cc{C}$ which can be used by an algorithm to receive
information about the constraints in $\cc{C}$.
The algorithm can propose  $a \in W$ to the oracle which
is conceived as its guess for a satisfying assignment. If $a$ indeed satisfies 
$\cc{C}$ then $\V$ answers $\yes$.
Otherwise there exists some violated constraint $C_j=R_{k_j}(x_{j_1}, \dots, 
x_{j_q})$, and the oracle has to reveal
some information about that. We will require that the oracle always reveals 
$j$, the index of the constraint $C_j$ in $\cc{C}$,
but in addition, it can also make further disclosures. It can be
the relation $R_{k_j}$ in  $\cc{R}$ 
(or equivalently its index ${k_j}$); it can also be
$(j_1, \ldots, j_q)$, the $q$-tuple of indices of the ordered variables
$x_{j_1}, \ldots , x_{j_q}$ in $\cc{V}$; or both of these. To characterize the choices for the additional information,
for any subset $\cc{U} \subseteq \{\cc{R}, \cc{V}\}$, we require that a $\cc{U}$-{\em revealing} oracle $\V_{\cc{U}}$ give out
the information corresponding to $\{\cc{C}\} \bigcup  \cc{U} \subseteq \{ \cc{C}, \cc{R}, \cc{V}\}$.
Thus for example a $\emptyset$-revealing oracle $\V_{\emptyset}$ reveals the index $j$ of some violated constraint but nothing else,
whereas a $\cc{V}$-revealing oracle $\V_{\{\cc{V}\}}$ also reveals the indices $(j_1, \ldots, j_q)$ of the variables of the relation in the clause $C_j$,
but not the name of the relation.

Analogously, for every $\CSP$ $\SP$, and
for every $\cc{U} \subseteq \{ \cc{R}, \cc{V}\}$, we define the {\em hidden constraint satisfaction problem} ($\HCSP$)
with $\cc{U}$-{\em revealing oracle}
$\HH{S}_{\cc{U}}$ whose parameters and type are those of $\SP$, but whose instances are specified by a $\cc{U}$-revealing oracle.
An algorithm {\em solves} the problem $\HH{$\SP$}_{\cc{U}}$ if for all $n,m$, for every instance $\cc{C}$ for $\SP$, specified by
any $\cc{U}$-revealing oracle for $\cc{C}$, it
outputs a satisfying assignment if there exists any, and $\no$ otherwise.
The complexity of an algorithm for $\HH{S}_{\cc{U}}$ is the number of steps in the worst case 
over all inputs and all $\cc{U}$-revealing oracles, where a query to the oracle is 
counted as one step.

\subsection{Discussion on the model}\label{subsec:std}

Our $\CSP$ model as described in Section~\ref{subsec:csp_model} includes the usual 
model of $\CSP$s, with a few additional features. 
Recall that we defined the alphabet size $w(n)$, the arity $q(n)$, and the number 
of relations $s(n)$ to be functions in $n$.\footnote{That the assignment length 
$\ell$ is a function of $n$ is, as far as we can see, quite standard in the 
literature. } The 
most common case is of course when 
these functions are just constant. We allow them to be functions for the following 
two 
reasons. Firstly, this allows us to include a couple of natural problems, like 
systems of linear equations over a finite field ($q(n)$ and $s(n)$ are not 
constant; see Claim~\ref{claim:hard}) and hyperplane non-cover ($w(n)$ is not 
constant; see 
Section~\ref{sec:CV}). Secondly, and more importantly, the arity extension  
produces relations whose arities $q$ are the same as the input length $\ell$, 
which then depends on $n$. 
This also makes the number of relations dependent on $n$, and 
the union operator may introduce even more new relations.

We see from above that, allowing $w$, $q$, and $s$ to be functions in $n$ is not 
only flexible, but also necessary for our purposes. This may cause some problems 
though, if we do not pose any constraint on such functions. We remedy 
these  
as 
follows, as already described in  
in Section~\ref{subsec:csp_model}. Firstly, we 
assume that all these functions can be computed in time polynomial in $n$. 
Secondly, we assume that the alphabet set and the relation set have succinct 
representations, and that
membership of tuples in 
every relation $R$ can be 
decided  
in time $\comp(R)$. 

While the above measures may look somewhat inconvenient, in all concrete $\CSP$s 
considered in this paper, they are satisfied in a straightforward manner. In most 
cases, $w(n)$ is polynomial in $n$; the only exception is hyperplane non-cover in 
Section~\ref{sec:CV}. The non-constant arity situation is mostly caused by arity 
extensions, in which case the new arity is just the assignment length $\ell$, 
and the number of new relations can be computed efficiently easily. 
The only problem with non-constant arities, not caused by arity extensions, is 
systems of linear questions as studied in Claim~\ref{claim:hard}. The relations 
created by the union operation or the arity 
extension operation have natural succinct representations, and the number of such 
relations can be easily computed. 
Furthermore, starting with a set of 
relations $\cal{R}$ with $\comp(\cc{R})$, whether an admissible assignment $v\in 
W$ satisfies a relation in $\bigcup$ \X{$\cc{R}$} involving 
$I\subseteq 
[\ell]^{(q)}$ can be computed in time 
$O(|I|\cdot |\cc{R}|\cdot \comp(\cc{R}))$.

Last but not least, the set of admissible assignments $W$ can also play 
a crucial  
role. For example in Section~\ref{sec:onlyC} we discuss monotone graph 
properties and the admissible assignments are minimal graphs satisfying a 
particular graph property. This set $W$ may cause similar 
problems if we do not pose any conditions on it, so we require that the membership 
of 
$W$ can be computed. For all cases in this paper this is satisfied trivially. 




\section{Transfer Theorems for Hidden CSPs}
\label{sec:reductions}

In this section we state and prove our transfer theorems between $\HCSP$s and $\CSP$s with extended types.

\begin{theorem}
 \label{cv_reduction}
\label{thm:UandH}

(a) If  $\bigcup \SP$ is solvable in time $T$ then $\HH{S}_{\{\cc{V}\}}$  is solvable in time
$O((T + s \times \comp(\cc{R})) \times m \times \min\{\dim (\bigcup \cc{R}), |W_q|\})$. \\
(b) If $\HH{S}_{\{\cc{V}\}}$ is solvable in time $T$ then $\bigcup \SP$  is solvable in time $O(T \times m \times \comp(\bigcup \cc{R}))$.
\end{theorem}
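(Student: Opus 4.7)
The plan for both directions is to organize information around the constraints: for each (known) constraint index $j$, track the tuple of variable indices $(j_1^j, \ldots, j_q^j)$ together with a set $B_j \subseteq W_q$ of patterns that the (unknown) relation $R_{k_j}$ is known to reject, and a ``current guess'' $\tilde R_j := \bigcup \{R \in \cc{R} : R \cap B_j = \emptyset\}$, which lies in $\bigcup \cc{R}$. The forward direction exploits this guess to drive the search, and the backward direction exploits it to fake the oracle.

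For part (a), I maintain $(B_j, \tilde R_j)$ for the known constraints and repeatedly solve the $\bigcup \SP$ instance $\tilde{\cc{C}}$ built from the $\tilde R_j$'s. If $\tilde{\cc{C}}$ is unsatisfiable I return $\no$: any assignment satisfying the hidden $\cc{C}$ would satisfy $\tilde{\cc{C}}$ too, because $R_{k_j} \subseteq \tilde R_j$ for every known $j$. If $\tilde{\cc{C}}$ yields a satisfying $a$ I submit it to $\V_{\{\cc{V}\}}$; a $\yes$ finishes the algorithm, while otherwise the oracle returns $(j,(j_1^j,\ldots,j_q^j))$, which either introduces a brand new constraint (possible at most $m$ times in total) or tells me that $(a_{j_1^j},\ldots,a_{j_q^j}) \in \tilde R_j \setminus R_{k_j}$. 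In the latter case adding this pattern to $B_j$ strictly shrinks $\tilde R_j$, which can happen at most $\dim(\bigcup \cc{R})$ times per $j$ (by the chain definition of dimension) and at most $|W_q|$ times per $j$ (since $|B_j|$ grows by one each time). Each iteration requires one call to the $\bigcup \SP$ solver in time $T$, plus an $O(s \cdot \comp(\cc{R}))$ update of the single affected $\tilde R_j$, matching the stated bound.

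For part (b), I reverse the roles: given a $\bigcup \SP$ instance $\cc{C}' = \{R'_j(x_{j_1^j},\ldots)\}_{j=1}^m$ with $R'_j = \bigcup_{R \in \cc{R}'_j} R$, I simulate a $\{\cc{V}\}$-revealing oracle by answering a query $a$ as follows: if $a$ satisfies every $R'_j$ (checkable in $O(m \cdot \comp(\bigcup \cc{R}))$), reply $\yes$; otherwise pick any violated $C'_j$ and reply $(j,(j_1^j,\ldots,j_q^j))$. The key correctness claim is that these responses are consistent with every $\SP$ instance obtained by fixing $R_{k_j} \in \cc{R}'_j$ for each $j$: since $R_{k_j} \subseteq R'_j$, whatever $R'_j$ rejects is also rejected by $R_{k_j}$. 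If the $\HH{S}_{\{\cc{V}\}}$ algorithm outputs $a$ I verify $a$ directly against $\cc{C}'$; if it outputs $\no$, then for any candidate satisfying assignment $a^*$ for $\cc{C}'$ one could choose $R_{k_j} \in \cc{R}'_j$ accepting $(a^*_{j_1^j},\ldots)$, producing a consistent satisfiable $\SP$ instance and contradicting $\no$, so $\cc{C}'$ is unsatisfiable. The main obstacle is exactly this second half of (b): one has to certify that the simulated oracle is genuinely a $\{\cc{V}\}$-revealing oracle for some fixed (if adversarial) $\SP$ instance so that the hidden-CSP algorithm's correctness guarantee applies, and keeping $R_{k_j}$ inside $\cc{R}'_j$ -- so that every simulated response remains faithful no matter which representative is chosen -- is what makes both the reduction and the transfer of unsatisfiability go through.
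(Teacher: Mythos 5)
Your proof is correct and follows essentially the same route as the paper: your $B_j$ and $\tilde R_j$ are exactly the paper's sets $A_j^t$ and constraints $C_j^t$ in part (a), and your part (b) simulates the oracle and justifies the $\no$ answer, as the paper does, by observing that the simulated answers are legitimate for any $\SP$-instance obtained by choosing $R_{k_j}\in\cc{R}'_j$ (your contrapositive phrasing via a hypothetical satisfying $a^*$ is the same argument). The only, harmless, deviation is that you build the $\bigcup\SP$ instance from the already-revealed constraints only, whereas the paper initializes all $m$ constraints with arbitrary variable tuples from the start.
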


We stress that in the theorem above instances of 
$\HH{S}_{\{\cc{V}\}}$ consisting of $m$ relations correspond to
instances of $\bigcup \SP$ also consisting of $m$ relations.

\begin{proof}
We first prove (a). Let $\cc{A}$ be an algorithm which solves $\bigcup \SP$ in time $T$. We define an algorithm $\cc{B}$ for $\HH{S}_{\{\cc{V}\}}$. The algorithm
will repeatedly call $\cc{A}$, until it finds a satisfying assignment or reaches the conclusion $\no$.
Here is a brief and somewhat informal description.
\begin{itemize}
\item[]
{\it Initialization.}
Set $A_1=A_2=\ldots=A_m=\emptyset$ and let $I_1,\ldots,I_m$ 
be arbitrary $q$-tuples of variable indices. 
\item[]
{\it Loop.} (Repeat the following steps until termination.)
\begin{itemize}

\item
Set $C_j$ to be the union of the relations from $\cc{R}$ 
violated by every tuple in $A_j$ ($j=1,\ldots,m$). 

\item
Call algorithm $\cc{A}$ for $\{C_1,\ldots,C_m\}$. Return $\no$ if
$\cc{A}$ returned $\no$. Otherwise let $a=(a_1,\ldots,a_\ell)$ be the 
assignment returned by $\cc{A}$.

\item
Call oracle $\V$ with assignment $a$. Return $a$ if
$\V$ accepted it. Otherwise let $j$ and $I=(j_1,\ldots,j_q)$ be the constraint
index resp.~the variable index array revealed by $\V$.

\item
Set $I_j=I$, add $a'=(a_{j_1},\ldots,a_{j_q})$ to
$A_j$ and continue loop.
\end{itemize}
\end{itemize}
In the following more detailed description, we indicate the actual
repetition number in upper indices. The instance $\cc{C}^t =\{C_1^t, \ldots, C_m^t\}$
of the $t$th call of $\cc{A}$ is defined as
$
C_j^t = \bigcup_{R \in \cc{R}: R \cap A_j^t = \emptyset} R (x_{j_1^t}, \ldots , x_{j_q^t})
$
where $A_j^t \subseteq W_q$ and $I_j=(j_1^t, \ldots , j_q^t)  \in [\ell]^{(q)}$, for $j \in [m],$
are determined successively by $\cc{B}$. The set $A_j^t$ reflects the algorithm's knowledge after $t$ steps:
it 
 contains those $q$-tuples which, at that instant, are known to be violating the $j$th constraint.
Initially $A_j^1 = \emptyset$
and $(j_1^1, \ldots , j_q^1)$ is arbitrary. If the output of $\cc{A}$ for $\cc{C}^t$ is $\no$ then $\cc{B}$ outputs $\no$.
If the output of $\cc{A}$ for $\cc{C}^t$ is $a \in W$ then $\cc{B}$ submits $a$ to the $\{\cc{V}\}$-revealing oracle $\V$.
If $\V$ answers $\yes$ then $\cc{B}$ outputs $a$. If the oracle does not find $a$ satisfying, and reveals
$j$ and $(j_1, \ldots , j_q)$ about the violated constraint, then $\cc{B}$ does not change $A_i^t$ and $(i_1^t, \ldots , i_q^t)$
for $i \neq j$, but
sets $A_j^{t+1} = A_j^{t} \bigcup \{ (a_{j_1} , \ldots , a_{j_q}) \}$,
and $(j_1^{t+1}, \ldots , j_q^{t+1}) = (j_1, \ldots , j_q)$. Observe that the $q$-tuple
for the $j$th constraint is changed at most once, the first time when the revealing oracle gives the index of the $j$th constraint.

To prove that the algorithm correctly solves $\HH{S}_{\{\cc{V}\}}$, let $\cc{C} = \{C_1, \ldots , C_m\} $ be an instance of
$\SP$ and let $\V$ be any
$\{\cc{V}\}$-revealing oracle for $\cc{C}$.
We have to show that if $\cc{B}$ answers $\no$ then $\cc{C}$ is unsatisfiable.
If $\cc{B}$ answers $\no$, then for some $t$, the $t$th call  of $\cc{A}$ resulted in output $\no$. 
By construction, $A_j^t$
and $(j_1^t, \ldots , j_q^t)$ are such
that for every $j \in [m]$,
if $R \cap A_j^t \neq \emptyset$ then 
the relation of $C_j$ can't be $R (x_{j_1^t}, \ldots , x_{j_q^t})$. Indeed, if 
$C_j = R (x_{j_1^t}, \ldots , x_{j_q^t})$ and
$ b \in R \cap A_j^t$  then
at the call when $b$ was added to $A_j^t$ the oracle's answer is incorrect. Therefore all possible remaining 
relations for $C_j$s are included
in $C_j^t$, and since $\cc{C}^t$ is unsatisfiable, so is $\cc{C}$.

For the complexity of the algorithm let us remark that if
for some $j$ and $t$, the constraint $C_j^t$ is the empty relation then $\cc{B}$ stops since $\cc{C}^t$ becomes unsatisfiable.
This happens in particular if
$A_j^t = W_q$. Since for every call to $\cc{A}$ one new element is added to one of the $A_j^t$ and at least one new relation in $\cc{R}$
is excluded from $C_j^t$, the number of calls is upper bounded by  $m \times \min\{\dim (\cc{R}), |W_q|\}.$ To compute a new constraint,
some number of relations in $\cc{R}$ have to be computed on a new argument, which can be done in time $s \times \comp(\cc{R})$.

We now prove (b). Let $\cc{A}$ be an algorithm which solves $\HH{S}_{\{\cc{V}\}}$ in time $T$.
Without loss of generality we suppose that $\cc{A}$ only outputs a satisfying assignment $a$ after submitting it to the verifying oracle.
We define an algorithm $\cc{B}$ for $\bigcup \SP$.
Let $\cc{C} =\{C_1, \ldots, C_m\}$ be an instance of $\bigcup \SP$ where
for $j \in [m]$,
$
C_j = \bigcup_{R \in \cc{R}_j} R (x_{j_1}, \ldots , x_{j_q}),
$
for some $\cc{R}_j \subseteq \cc{R}$ and $I_j=(j_1, \ldots , j_q)  \in [\ell]^{(q)}$.
The algorithm $\cc{B}$
runs $\cc{A}$, and outputs $\no$ whenever $\cc{A}$ outputs $\no$.
During $\cc{A}$'s run $\cc{B}$
simulates a ${\{\cc{V}\}}$-revealing oracle $\V$ for $\cc{A}$. 
Here is rather informal description of $\cc{B}$. 

\begin{itemize}
\item[]
{\it Initialization.}
Set $A_1=A_2=\ldots=A_m=\emptyset$ and run $\cc{A}$ until it
calls oracle $\V$ first time.
\item[]
{\it Loop.} (Repeat the following steps until termination.)
\begin{itemize}
\item
Let $a=(a_1,\ldots,a_\ell)$ be the assignment with which $\cc{A}$ calls
$\V$. Check if $a$ satisfies $\cc{C}$. Return $a$ if yes.

\item
Choose index $j$ such that $a$ violates $C_j$ and
add tuple
$(a_{j_1},\ldots,a_{j_q})$ to $A_j$. 

\item
Run subsequent steps of $\cc{A}$  
(with $j$ and $(j_1,\ldots,j_q)$ as revealed information)
until the next oracle call. 
\end{itemize}
\end{itemize}

Now we give more details about how $\cc{B}$ implements $\V$.
Simultaneously with $\V$'s description, for $t \geq 1$, we also specify instances $\cc{C}^t =\{C_1^t, \ldots, C_m^t\}$
of $\bigcup \SP$ which
will be used in the proof of correctness of the algorithm. For $j \in [m]$, the constraints of $\cc{C}^t $ are defined as
$
C_j^t = \bigcup_{R \in \cc{R}_j: R \cap A_j^t = \emptyset} R (x_{j_1}, \ldots , 
x_{j_q}),$
where the sets $A_j^t \subseteq W_q$ are determined by the result of the $t$th call to the oracle.
Initially $A_j^0 = \emptyset$.
For the $t$th request $a \in W$, the algorithm
$\cc{B}$ checks if $a$ satisfies $\cc{C}$. If it is the case then
$\V$ returns $\yes$ 
and $\cc{B}$ outputs $a$. Otherwise there exists $j \in [m]$
such that $a$ violates $C_j$, and the answer of the oracle is $j$ and $(j_1, \ldots , j_q)$
(where $j$ can be chosen arbitrarily among the violated constraints, if there are several).
Observe that this is a legitimate oracle for any instance of
$\HH{S}_{\{\cc{V}\}}$ whose $j$th constraint is arbitrarily chosen from $\cc{R}_j$.
We define $A_j^{t} = A_j^{t-1} \bigcup \{ (a_{j_1} , \ldots , a_{j_q}) \}$, and for
$i \neq j$ we set $A_i^{t} = A_i^{t-1}$.

To show the correctness of $\cc{B}$, we prove that whenever $\cc{A}$ outputs $\no$, the instance $\cc{C}$
is unsatisfiable. Let us suppose that $\cc{A}$ made $t$ queries before outputting $\no$. An algorithm
for $\HH{S}_{\{\cc{V}\}}$ can output $\no$ only if all possible instances of $\SP$ which are compatible
with the answers received from the oracle are unsatisfiable. In such an instance the 
relation of the $j$th constraint has necessarily
empty intersection with $A_j^t$, therefore we can deduce that the $\bigcup \SP$ instance $\cc{C}^t $ is
unsatisfiable. It also holds that $A_j^t \bigcap (\bigcup_{R \in \cc{R}_j}R) = \emptyset$ 
for every $j \in [m]$, since if $b \in A_j^t \bigcap (\bigcup_{R \in \cc{R}_j}R)$
then the request to the oracle that caused $b$ to be
 added to $A_j^t$ wouldn't violate the $j$th constraint.
Thus $\bigcup_{R \in \cc{R}_j}R \subseteq \bigcup_{R \in \cc{R}: R \cap A_j^t = \emptyset}R$, 
and $\cc{C}$ is unsatisfiable.

For the complexity analysis we observe that during the algorithm, for every query to the oracle and for every constraint, one relation in $\bigcup \cc{R}$ is evaluated.
\end{proof}

\begin{theorem}
 \label{cr_reduction}
(a) If  \E{S} is solvable in time $T$ then $\HH{S}_{\{\cc{R}\}}$  is solvable in time
$O( (T +  
|[\ell]^{(q)}|  \times \comp(\cc{R})  ) \times m \times 
|[\ell]^{(q)}|).$ \\
(b) If $\HH{S}_{\{\cc{R}\}}$ is solvable in time $T$ then \E{S}  is solvable in time $O(T \times m \times \comp(   \E{$\cc{R}$}        ))$.
\end{theorem}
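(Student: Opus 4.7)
The proof will closely mirror Theorem~\ref{cv_reduction}, with the roles of hidden and revealed information swapped: the $\{\cc{R}\}$-revealing oracle discloses the constraint index and the relation $R_k$, but hides the variable tuple, so the compensating problem is \E{S} whose relations are unions $R^I = \bigcup_{(j_1,\dots,j_q) \in I} R^{(j_1,\dots,j_q)}$ of $\ell$-ary extensions of a single $R \in \cc{R}$ over a tuple set $I \subseteq [\ell]^{(q)}$.

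For (a), I plan to maintain, for each constraint index $j$ that has already been returned by the oracle, the relation $R_{k_j}$ (known from the first answer about $j$) together with a set $I_j^t \subseteq [\ell]^{(q)}$ of tuples still possible for the true constraint. At iteration $t$, the algorithm $\cc{B}$ builds an \E{S} instance $\cc{C}^t$ whose $j$th constraint is $R_{k_j}^{I_j^t}(x_1,\dots,x_\ell)$ for each already-encountered $j$, and calls the \E{S} solver $\cc{A}$. A $\no$ answer is forwarded as $\no$. An assignment $a$ returned by $\cc{A}$ is submitted to the revealing oracle; a $\yes$ answer is forwarded, while an answer $(j,R)$ is processed by setting $R_{k_j} := R$ (with $I_j^t$ initialized to $[\ell]^{(q)}$) if $j$ is new, and then removing from $I_j^t$ every tuple $(i_1,\dots,i_q)$ with $(a_{i_1},\dots,a_{i_q}) \in R$. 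Correctness rests on the relaxation observation that $R_{k_j}^{I_j^t} \supseteq R_{k_j}^{(i_1^*,\dots,i_q^*)}$ for the true tuple $(i_1^*,\dots,i_q^*)$, so unsatisfiability of $\cc{C}^t$ implies unsatisfiability of the hidden instance. Progress holds because each non-terminating iteration either registers a new $j$ (at most $m$ times) or shrinks some $I_j^t$ by at least one tuple (at most $m \cdot |[\ell]^{(q)}|$ times total), which yields the stated iteration count; the extra $|[\ell]^{(q)}| \cdot \comp(\cc{R})$ per iteration accounts for updating $I_j^t$ via one membership test in $R$ per candidate tuple.

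For (b), I will use the $\HH{S}_{\{\cc{R}\}}$ solver $\cc{A}$ to decide an \E{S} instance $\cc{C} = \{R_{k_j}^{I_j}(x_1,\dots,x_\ell)\}_{j \in [m]}$ by simulating a $\{\cc{R}\}$-revealing oracle $\V$: on query $a$, verify $a$ against $\cc{C}$; if $a$ satisfies $\cc{C}$ return $\yes$ and output $a$, otherwise pick any $j$ with $a \notin R_{k_j}^{I_j}$ and return $(j, R_{k_j})$. This is a legitimate oracle for any hidden $\SP$-instance whose $j$th constraint is $R_{k_j}(x_{i_1},\dots,x_{i_q})$ for some $(i_1,\dots,i_q) \in I_j$. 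Conversely, whenever $\cc{C}$ is satisfied by some $a^*$, for each $j$ the relation $a^* \in R_{k_j}^{I_j}$ supplies a witness tuple in $I_j$, and the resulting hidden instance is both satisfiable (by $a^*$) and consistent with every answer the simulator has issued, because on every negatively answered query $a$ we had $a \notin R_{k_j}^{I_j}$ and therefore $(a_{i_1},\dots,a_{i_q}) \notin R_{k_j}$ on every tuple of $I_j$, including the witness one. Hence a $\no$ output of $\cc{A}$ certifies $\cc{C}$ unsatisfiable, while a returned $a$ is verified satisfying since we may assume $\cc{A}$ only outputs assignments it has just queried. The per-query simulation cost is $O(m \cdot \comp(\E{\cc{R}}))$ for evaluating $\cc{C}$ on $a$, giving the claimed bound.

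The most delicate step, and the one I expect to require the most care in writing up, is the consistency argument in (b): producing, from a hypothetical satisfying assignment of $\cc{C}$, a single hidden instance that is both satisfiable and compatible with every answer the simulator has already committed to, so that a $\no$ answer from $\cc{A}$ truly forces $\cc{C}$ to be unsatisfiable. The mirror subtlety in (a) is the observation that it is harmless to omit, from the constructed \E{S} instance, constraints whose relation is not yet known; this is precisely what keeps the iteration count at $O(m \cdot |[\ell]^{(q)}|)$ rather than depending on how many relations remain unrevealed.
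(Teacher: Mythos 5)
Your proposal is correct and follows essentially the same route as the paper's proof: part (a) maintains, per revealed constraint, the disclosed relation together with a shrinking set of candidate variable tuples and feeds the resulting relaxed \E{S} instance to the solver, and part (b) simulates an $\{\cc{R}\}$-revealing oracle by checking queries against the \E{S} instance, exactly as in the paper. The only difference is cosmetic: the paper argues correctness of (b) by noting that every compatible hidden instance is a sub-relation of the tracked relaxed instance, while you argue the contrapositive by building one satisfiable compatible hidden instance from witness tuples of a satisfying assignment; both rest on the same consistency observation.
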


Like above, instances of 
$\HH{S}_{\{\cc{R}\}}$ consisting of $m$ relations correspond to
instances of $\E{S}$ also consisting of $m$ relations.

\begin{proof}
The proof is similar to the proof of Theorem~\ref{cv_reduction}.
We first prove (a). Let $\cc{A}$ be an algorithm which solves $ \E{S}$ in time $T$. We define an algorithm $\cc{B}$ for $\HH{S}_{\{\cc{R}\}}$.
The algorithm will repeatedly call $\cc{A}$, until it finds a satisfying assignment or reaches the conclusion $\no$.
Since each constraint of $ \E{S}$ is an $\ell$-ary relation, we can identify it with 
the 
relation itself. Here is again a brief and informal description of $\cc{B}$.

\begin{itemize}
\item[]
%
{\it Initialization.}
Set $A_1=A_2=\ldots=A_m=\emptyset$ and
$C_1=C_2=\ldots=C_m=W$.

\item[]
{\it Loop.} (Repeat the following steps until termination.)
\begin{itemize}

\item
Call $\cc{A}$ for $(C_1,\ldots,C_m)$. Return $\no$ if
$\cc{A}$ returned $\no$. Otherwise let $a=(a_1,\ldots,a_\ell)$ be the assignment
output by $\cc{A}$.

\item
Call oracle $\V$ with assignment $a$. Return $a$ if
$\V$ accepted it. Otherwise let $j$ and $R$ be the constraint
index resp.~the relation revealed by $\V$.

\item
Add $a=(a_{j_1},\ldots,a_{j_\ell})$ to $A_j$,
let $I$ be the set of the 
variable index arrays $(j_1,\ldots,j_q)$ such that
$R^{(j_1,\ldots,j_q)}$ is violated by every tuple in $A_j$,
update $C_j$ to be $R^{I}$, the
union of the relations $R^{(j_1,\ldots,j_q)}$ over
$(j_1,\ldots,j_q)\in I$ and continue loop.
\end{itemize}
\end{itemize}

In the more detailed description and analysis we again apply 
repetition indices. For the first call $\cc{C}^1 =\{C_1^1, \ldots, C_m^1\}$ we set $C_j^1 = W$.
For $t>1$, the instance of the $t$th call will be defined recursively
via $A_j^t \subseteq W$ and $I_j^t \subseteq [\ell]^{(q)}$, for $j \in [m]$, where
initially we set $A_1^1 = \ldots = A_m^1 = \emptyset$ and $I_1^1 = \ldots = I_m^1 = [\ell]^{(q)}$.
Here the set $I_j^t$ reflects the algorithm's knowledge after $t$ steps:
it contains those $q$-tuples of indices which, at that instant, can still be the variable indices of the  
$j$th constraint.
If the output of $\cc{A}$ for $\cc{C}^{t-1}$
is $\no$ then $\cc{B}$ outputs $\no$.
If the output of $\cc{A}$ for $\cc{C}^{t-1}$ is $a \in W$ then $\cc{B}$ submits $a$ to the $\{\cc{R}\}$-revealing oracle $\V$.
If $\V$ answers $\yes$ then $\cc{B}$ outputs $a$. If the oracle does not find $a$ satisfying, and reveals
$j$ and $R \in \cc{R}$ about the violated constraint, then $\cc{B}$ does not change $A_i^{t-1}$, $I_i^{t-1}$ and
$C_i^{t-1}$
for $i \neq j$, but
sets $A_j^{t} = A_j^{t-1} \bigcup \{ a \}$
and $I_j^t = \{ (j_1, \ldots , j_q) ~:~  A_j^t  \bigcap R^{(j_1, \ldots , j_q)} = \emptyset \}$.
Finally we define
$C_j^t = R^{I_j^t}$.

To prove that the algorithm correctly solves $\HH{S}_{\{\cc{R}\}}$, let $\cc{C} = \{C_1, \ldots , C_m\} $ be an instance of
$\SP$ and let $\V$ be any
$\{\cc{R}\}$-revealing oracle for $\cc{C}$.
We have to show that if $\cc{B}$ answers $\no$ then $\cc{C}$ is unsatisfiable.
If $\cc{B}$ answers $\no$ then for some $t$, the $t$th call  of $\cc{A}$ resulted in output $\no$. 
We claim that for every constraint $C_j$ whose relation $R$ has been already revealed,
if $R^{(j_1, \ldots , j_q)} \cap A_j^t \neq \emptyset$ then $C_j$ can not be $R (x_{j_1}, \ldots , x_{j_q})$.
Indeed, if
$C_j = R^{(j_1, \ldots , j_q)} (x_{j_1}, \ldots , x_{j_q})$ and
$ a \in R \cap A_j^t$  then
at the call when $a$ was added to $A_j^t$ the oracle answer is incorrect. Therefore $C_j^t$ is the union,
over all still possible variable index $q$-tuples $(j_1, \ldots , j_q)$, of
 $R^{(j_1, \ldots , j_q)}$. Since $\cc{C}^t$ is unsatisfiable, so is $\cc{C}$.

For the complexity of the algorithm let us remark that if
for some $j$ and $t$, the constraint $C_j^t$ is the empty relation then $\cc{B}$ stops since $\cc{C}^t$ becomes unsatisfiable.
This happens in particular if
$I_j^t = \emptyset$. Since for every call to $\cc{A}$, for some $j$, the size of $I_j^t$ decreases by at least one, the total number of calls
is upper bounded by $m \times |[\ell]^{(q)}|$.To compute a new constraints,
at most $|[\ell]^{(q)}|$ relations from $\cc{R}$ evaluated in a new argument. Therefore the overall complexity is as claimed.

We now prove (b). Let $\cc{A}$ be an algorithm which solves $\HH{S}_{\{\cc{R}\}}$ in time $T$.
Without loss of generality we suppose that $\cc{A}$ only outputs a satisfying assignment $a$ after submitting it to the verifying oracle.
We define an algorithm $\cc{B}$ for $\E{S}$.
Let $\cc{C} =\{C_1, \ldots, C_m\}$ be an instance of $\E{S}$ where
for $j \in [m]$, we have $C_j = R_{k_j}^{I_j}$
for some $R_{k_j} \in \cc{R}$ and $I_j \subseteq [\ell]^{(q)}$.
The algorithm $\cc{B}$
runs $\cc{A}$, and outputs $\no$ whenever $\cc{A}$ outputs $\no$.
During $\cc{A}$'s run $\cc{B}$
simulates an ${\{\cc{R}\}}$-revealing oracle $\V$ for $\cc{A}$.
Here is an informal description of $\cc{B}$.
\begin{itemize}
\item[]
%
{\it Initialization.}
Set $A_1=A_2=\ldots=A_m=\emptyset$. Run $\cc{A}$ until it calls
$\V$ first time.

\item[]
{\it Loop.} (Repeat the following steps until termination.)
\begin{itemize}

\item
Let $a$ be the assignment that $\cc{A}$ submits to $\V$.
Return $a$ if it satisfies $\cc{C}$. 

\item
Choose and index $j$ such that $a$ violates $C_j$. 

\item
Run subsequent steps of $\cc{A}$ with revealed information
$j$ and $R_{k_j}$ until the next oracle call.
\end{itemize}
\end{itemize}
 
Now we give more details. Simultaneously with $\V$'s description, for $t \geq 1$, we also specify instances $\cc{C}^t =\{C_1^t, \ldots, C_m^t\}$
of $\E{S}$ which
will be used in the proof of correctness of the algorithm. 
Again we identify the $\ell$-ary constraints with their relations.
The constraints of $\cc{C}^t $ are set to be
$C_j^t = R_{k_j}^{I_j^t}$, where the sets $I_j^t \subseteq [\ell]^{(q)}$ are defined as
$I_j^t = \{ (j_1, \ldots , j_q) ~:~  A_j^t \bigcap R_{k_j}^{(j_1, \ldots , j_q)} = \emptyset \}$, and
the sets $A_j^t \subseteq W$ are determined by the result of the $t$th call to the oracle.
Initially $A_j^0 = \emptyset$.
For the $t$th request $a \in W$, the algorithm
$\cc{B}$ checks if $a$ satisfies $\cc{C}$. If it is the case then
$\V$ returns $a$ and $\cc{B}$ outputs $a$. Otherwise there exists $j \in [m]$
such that $a$ violates $C_j$, and the answer of the oracle is $j$ and $R_{k_j}$. 
Observe that this is a legitimate oracle for any instance of
$\HH{S}_{\{\cc{R}\}}$ whose $j$th constraint is arbitrarily chosen from $\{ R_{k_j}(x_{j_1}, \ldots, x_{j_q}) ~:~  (j_1, \ldots, j_q) \in I_j \}$.
We define $A_j^{t} = A_j^{t-1} \bigcup \{ a \}$, and for
$i \neq j$ we set $A_i^{t} = A_i^{t-1}$.

To show the correctness of $\cc{B}$, we prove that whenever $\cc{A}$ outputs $\no$, the instance $\cc{C}$
is unsatisfiable. Let us suppose that $\cc{A}$ made $t$ queries before outputting $\no$. An algorithm
for $\HH{S}_{\{\cc{R}\}}$ can output $\no$ only of all possible instances of $\SP$ which are compatible
with the answers received from the oracle are unsatisfiable. In such an instance the $j$th constraint has necessarily
empty intersection with $A_j^t$, therefore we can deduce that the $\E{S}$ instance $\cc{C}^t $ is
unsatisfiable. It also holds that $A_j^t \bigcap C_j = \emptyset$ for every $j \in [m]$, since if $a \in A_j^t \bigcap C_j$
then the request to the oracle that caused $a$ to be
 added to $A_j^t$ wouldn't violate the $j$th constraint.
Thus $C_j \subseteq C_j^t$, and $\cc{C}$ is unsatisfiable.

For the complexity analysis we just have to observe that during the algorithm, for every query to the oracle and for every constraint, one relation in
$ \E{$\cc{R}$} $ is evaluated.
\end{proof}


\begin{theorem}
\label{c_reduction}
 \label{thm:UXH}

(a) If $\bigcup \X{S}$ is solvable in time $T$ then $\HH{S}_{\emptyset}$  is solvable in time
$O(  (T + s \times  \frac{\ell !}{( \ell - q)!} \times \comp(\cc{R})  ) \times m \times \dim (\bigcup \mbox{\X{$\cc{R}$}}))$. \\
(b) If $\HH{S}_{\emptyset}$ is solvable in time $T$ then $\bigcup  \X{S}$  is solvable in time $O(T \times m \times \comp(\bigcup \mbox{\X{$\cc{R}$}}))$.
\end{theorem}


Again, instances of 
$\HH{S}_{\emptyset}$ consisting of $m$ relations correspond to
instances of $\bigcup \X(S)$ also consisting of $m$ relations.

\begin{proof} Apply Theorem~\ref{cv_reduction} to \X{S} and observe that $\HH{\X{S}}_{\{\cc{V}\}}$ and $\HH{S}_{\emptyset}$
are essentially the same in the sense that an algorithm solving one of the problems also solves the other one. Indeed, the
variable index disclosure of the $\{\cc{V}\}$-revealing oracle is pointless since the relations in \X{S} involve all variables.
Moreover, the map sending a constraint $R(x_{j_1}, \dots x_{j_q}) $ of {\SP} to the constraint
$R^{(j_1, \ldots , j_q)}(x_1, \ldots x_{\ell})$ of \X{S} is a
bijection which preserves satisfying assignments.
\end{proof}

\begin{corollary}
 \label{equivalence}
Let $\comp (\cc{R})$ be polynomial. Then the complexities of the following problems are polynomial time equivalent: 
(a) $\HH{S}_{\{\cc{V}\}}$ and $\bigcup \SP$ if the number of relations $s$ is constant, 
(b) $\HH{S}_{\{\cc{R}\}}$  and  \E{S} if the arity $q$ is constant, 
(c) $\HH{S}_{\emptyset}$ and $\bigcup  \X{S}$ if both $s$ and $q$ are constant.
\end{corollary}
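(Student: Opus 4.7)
The plan is simply to plug the parameter assumptions into Theorems~\ref{cv_reduction}, \ref{cr_reduction} and \ref{c_reduction} and verify that every blow-up factor appearing in the runtime bounds is polynomial in the instance size $n + m(\log s + q\log\ell) + \ell\log w$. Throughout I would use the standing assumption that $\comp(\cc{R})$ is polynomial, so that evaluating any relation in $\cc{R}$ on a given tuple takes polynomial time.

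For part (a), I would invoke Theorem~\ref{cv_reduction}. The nontrivial factors in the forward direction are $m$, $\dim(\bigcup\cc{R})$ and $s\cdot\comp(\cc{R})$; all three are polynomial since $s$ is constant (recall $\dim(\bigcup\cc{R})\le|\cc{R}|=s$) and $\comp(\cc{R})$ is polynomial. For the reverse direction the only new factor is $\comp(\bigcup\cc{R})$; a relation in $\bigcup\cc{R}$ is a union of at most $s$ relations of $\cc{R}$, so membership can be tested in $O(s\cdot\comp(\cc{R}))$ time, again polynomial.

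For part (b), I would apply Theorem~\ref{cr_reduction}. When $q$ is constant we have $|[\ell]^{(q)}|\le\ell^q$, which is polynomial, so the forward direction's factor $|[\ell]^{(q)}|^2\cdot\comp(\cc{R})$ is polynomial. For the reverse direction I need $\comp(\text{\E{$\cc{R}$}})$ to be polynomial: a relation $R^I$ with $I\subseteq[\ell]^{(q)}$ is defined by $R^I=\bigcup_{(j_1,\ldots,j_q)\in I}R^{(j_1,\ldots,j_q)}$, so membership of a fixed $a\in W$ can be decided by evaluating $R$ on at most $|[\ell]^{(q)}|$ subtuples, hence in polynomial time.

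For part (c), I would invoke Theorem~\ref{c_reduction}. Here the forward direction depends on $\dim(\bigcup\,\text{\E{$\cc{R}$}})$ and on $s\cdot|[\ell]^{(q)}|\cdot\comp(\cc{R})$. The latter is polynomial when both $s$ and $q$ are constant. For the dimension factor I would use the identity $\bigcup\,\text{\E{$\cc{R}$}}=\bigcup\,\text{\X{$\cc{R}$}}$ recorded in Section~\ref{sec:prelims}, together with the observation that $|\text{\X{$\cc{R}$}}|\le s\cdot|[\ell]^{(q)}|$, which is polynomial under the hypotheses. Hence $\dim(\bigcup\,\text{\E{$\cc{R}$}})\le|\text{\X{$\cc{R}$}}|$ is polynomial. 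The reverse direction reduces to bounding $\comp(\bigcup\,\text{\E{$\cc{R}$}})$, which by the same counting is polynomial: any relation in $\bigcup\,\text{\E{$\cc{R}$}}=\bigcup\,\text{\X{$\cc{R}$}}$ is a union of at most $s\cdot|[\ell]^{(q)}|$ relations of the form $R^{(j_1,\ldots,j_q)}$, each computable in time $\comp(\cc{R})$.

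The only step requiring real care is part (c), because naively $|\text{\E{$\cc{R}$}}|$ could be as large as $s\cdot 2^{|[\ell]^{(q)}|}$, which would make $\dim(\bigcup\,\text{\E{$\cc{R}$}})$ appear superpolynomial; the trick is to exploit the identity $\bigcup\,\text{\E{$\cc{R}$}}=\bigcup\,\text{\X{$\cc{R}$}}$ to replace this exponential bound by the polynomial bound $|\text{\X{$\cc{R}$}}|\le s\cdot\ell^q$. Everything else is a direct substitution of the hypotheses into the theorems.
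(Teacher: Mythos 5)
Your proposal is correct and matches the paper's (implicit) argument: the corollary is stated as an immediate consequence of Theorems~\ref{cv_reduction}, \ref{cr_reduction} and \ref{c_reduction}, obtained exactly by substituting the hypotheses and checking each blow-up factor is polynomial, including the key use of $\bigcup\text{\E{$\cc{R}$}}=\bigcup\text{\X{$\cc{R}$}}$ to bound the dimension in part (c).
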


The polynomial time equivalences of 
Theorems~\ref{cv_reduction},~\ref{cr_reduction},~\ref{c_reduction} and Corollary~\ref{equivalence} 
remain true when the algorithms have access to the same computational oracle. Therefore, 
we get generic easiness results for $\HCSP$s under an $\NP$ oracle. 

We also remark that the number of iterations in algorithms for
the transfer theorems give generic upper bounds on the {\em trial
complexity} of the hidden CSPs In the constraint index and variable 
revealing model this bound is $m\times \min\{\dim \bigcup\cc{R},|W_q|\}$, 
in the constraint index and relation
revealing model $m\times [\ell]^{(q)}$ oracle calls are sufficient, while
in the constraint index revealing model we obtain the bound $m\times \dim
\bigcup X-\cc{R}$.


\section{Constraint-index and Variables Revealing Oracle}
\label{sec:CV}

In this section, we present some applications of our transfer theorem 
when the index of the constraint and the variables participating 
in that constraint are revealed. We consider the following $\CSP$s.
In the descriptions below, unless explicitly specified, 
$W$ is the full domain $[w]^\ell$.

\begin{enumerate}
\item {Deltas on Triplets} ($\Delta$):
Formally, $w = 2, \; q = 3,$ and $\cc{R} = \{R_{abc} : \{0, 1\}^3 \to \{\true, \false\} \mid a,b,c \in \{0, 1\}\},$
where $R_{abc}(x, y, z) := (x=a) \wedge (y=b) \wedge (z=c).$ 
\item {Hyperplane Non-Cover} ($\SF{HYP\!-\!NC}$): Let $p$ be a prime, and $F_p$ 
be the field of size $p$. Denote $V = 
F_p^N$, and $S = \{\text{all hyperplanes in } F_p^N\}$. Informally, given a set of 
hyperplanes $S'\subseteq S$, the problem asks to decide if there exists $v\in 
F_p^N$ not covered by these hyperplanes. 
Formally, $\ell=1, \; q=1, \; w=p^N, \; W = V$  
and $\cc{R}_{S} = \{R_{H} \mid H \in S\}$ where $R_{H}(a)$ evaluates to $\true$ 
if and only if $a\notin H$.

\item {Arbitrary sets of binary relations on Boolean alphabet,} (in particular, $\SAT{2}$):  
Formally for $\SAT{2}$, we have $w=2, \; q = 2,$ and 
$\cc{R} = \{R_T,R_F, R_a, R_b, R_{\neg a}, R_{\neg b}, R_{a \vee b}, R_{a \vee \neg b},
R_{\neg a \vee b},  R_{\neg a \vee \neg b} \},$ where for $(\alpha, \beta) \in \{\true, \false\}^q,$ 
$R_T(\alpha, \beta) := \true, \; R_ F(\alpha, \beta) := \false, \; R_a(\alpha, \beta) := \alpha, 
\; R_b(\alpha, \beta) := \beta,$ \\ 
$R_{\neg a}(\alpha, \beta) := \neg \alpha, \; R_{\neg b}(\alpha, \beta) := \neg \beta, \; 
R_{a \vee b}(\alpha, \beta) := \alpha \vee \beta, \; R_{a \vee \neg b}(\alpha, \beta) := \alpha \vee \neg \beta,$ \\ 
$R_{\neg a \vee b}(\alpha, \beta) := \neg \alpha \vee \beta, \; 
R_{\neg a \vee \neg b}(\alpha, \beta) := \neg \alpha \vee \beta.$
\item {Exact-Unique Game Problem} ($\UG{k}$):
Given an undirected graph, $G = (V, E)$, and a permutation $\pi_e : \ALPHA{k} \to \ALPHA{k}$ 
for every edge $e \in E$, the goal is to decide if one can assign labels $\alpha_v \in \ALPHA{k}$ 
for every vertex $v \in V$ \st for every edge $e = \{u, v\} \in E$ with $u < v$ we have 
$\pi_e(\alpha_u) = \alpha_v.$ Formally: $w = k, \; q = 2$ and 
$\cc{R}= \{ \pi : \ALPHA{k} \to \ALPHA{k} \mid \pi \text{ is a permutation} \}.$ 
\item {$k$-Clique Isomorphism} ($\CISO{k}$): \label{itm:klq}
Given an undirected graph $G = (V, E),$ determine if there exists a permutation $\pi$ on $[n]$ \st
\begin{itemize} 
\item[(1)] $\forall (i, j) \in E, \; \;   R_{\leq k}(\pi(i), \pi(j))$;
\item[(2)] $\forall (i, j) \notin E, \; \; \neg R_{\leq k}(\pi(i), \pi(j)).$
\end{itemize}
Formally, $w = n, \; q = 2, \; \ell=n$, 
$W$ is the set of $n$-tuples of integers from $[n]$ which
define permutations on $[n]$, and 
$\;\; \cc{R} = \{R_{\leq k}, \neg R_{\leq k} \},$ where 
$R_{\leq k}(\alpha, \beta) := \true$ $\iff \alpha \leq k\; \&\; \beta \leq k.$ 
\item {Equality to some member in a fixed class of graphs} ($\EQ{\cc{K}}$):
For a fixed class $\cc{K}$ of graphs on $n$ vertices
variables, we denote by 
$\cc{P}_\cc{K} : \{0, 1\}^{n \choose 2} \to \{T, F\}$ 
the property of being equal to a graph from $\cc{K}$. We assume that
graphs are represented by tuples from $\{0,1\}^{n \choose 2}$.
Formally, 
$W = \cc{K}, \; w = 2, \; q = 1, \; \ell = {n \choose 2},$ and 
$\cc{R} = \{ \Id, \Neg \}.$ Here we assume that membership in $\cc{K}$ can
be tested in polynomial time. We will consider the following special cases:
\begin{itemize}
\item {Equality to $k$-Clique} ($\EQ{\CLQ{k}}$): 
Given a graph, decide if it is equal to a $k$-clique.
\item {Equality to Hamiltonian Cycle} ($\EQ{\SF{HAMC}}$):
Decide if $G$ is a cycle on all $n$ vertices.
\item {Equality to Spanning Tree} ($\EQ{\SF{ST}}$):
Given a graph, decide if it is a spanning tree.
\end{itemize}

\end{enumerate}

We have seen in the Introduction that the hidden version of 
$\SAT{1}$ in the constraint index revealing model is $\NP$-hard.
Here we will show that if the variables are also revealed, even the hidden
version of $\SAT{2}$ becomes solvable in polynomial time. Deltas on triplets 
will provide a simple example of ternary Boolean constraints 
for which the ``normal'' satisfaction problem can be solved in polynomial
time but the hidden version becomes $\NP$-hard. 
Hyperplane Non-Cover over the two-element field is equivalent to
a system of linear equations. Interestingly, its hidden version will
turn out to be $\NP$-hard. 
The unique game problem is a prominent CSP problem, whose approximate version has 
been studied intensively since \cite{Kho02}. It is known that the exact version is 
in $\P$ for any 
$k$, and we show that it is only easy in the trial and error model for $k=2$. 
Isomorphisms with cliques is a problem considered
in~\cite{BCZ12}. We included it to demonstrate how easy to prove hardness
of its hidden version based on the transfer theorem. Also note that
our hardness result is somewhat stronger than that of~\cite{BCZ12}
as the latter is proved for the constraint index revealing model while
here more information is revealed. A formally different,
although logically equivalent formulation of the same problem is
equality with a $k$-clique. The hardness result can be extended to
equalities with other distinguished graphs, like Hamiltonian circles.
However, equality with spanning trees will remain easy.

 \begin{theorem}
 \label{thm:cv-p}
 The following problems can be solved in polynomial time:
  (a) $\HH{$\SAT{2}$}_{\CV}$,
(b) $\HH{$\UG{2}$}_{\CV}$,  
(c)  $\HH{$\EQ{\SF{ST}}$}_{\CV}$.
 \end{theorem}

 \begin{theorem}
 \label{thm:cv-np}
 The following are $\NP$-hard:
  (a) $\HH{$\Delta$}_{\CV}$,
  (b) $\HH{$\SF{HYP\!-\!NC}$}_{\CV}$,
  (c) $\HH{$\UG{k}$}_{\CV}$ for $k \geq 3$,  
  (d)  $\HH{$\CISO{k}$}_{\CV}$ for $0.1 n \leq k \leq 0.9 n$,
  (e) $\HH{$\EQ{\CLQ{k}}$}_{\CV}$ for $0.1 n \leq k \leq 0.9 n$,
  (f)  $\HH{$\EQ{\SF{HAMC}}$}_{\CV}$.
 \end{theorem}

\begin{proof}[Proof of Theorem~\ref{thm:cv-p}] We show that the following problems 
in the hidden model with the constraint and variable index revealing oracle 
are solvable in polynomial time.

\begin{description}
\item[\textnormal{(a) Arbitrary binary Boolean relations ($\HH{$\SAT{2}$}_{\CV}$)}] \hfill \\
In the case of $\SAT{2}$, taking the union of any two relations in $\cc{R}_{\SAT{2}}$ is 
equivalent to the disjunction of the two boolean expressions the relations signify. For 
example, $R_{a} \bigcup R_{b} = R_{a \vee b}$ and the union remains in $\cc{R}_{\SAT{2}}$.
Hence, $\bigcup\SAT{2} = \SAT{2}$, which is in $\P.$ Therefore, from 
Theorem~\ref{thm:UandH}(a), $\HH{$\SAT{2}$}_{\{\cc{V}\}}$ is also in $\P.$ 

The above statement can be extended to an {\em arbitrary} set 
$\cc{R}'$ of binary relations as follows. Let $\cc{R}''$
stand for the set of {\em all} binary relations
in Boolean variables. We trivially have 
${\bigcup \cc{R}' \subseteq \cc{R}''}$, 
therefore an instance of $\HH{$\SAT{2}$}_{\{\cc{V}\}}$ can actually
be described by a conjunction of the form $\bigwedge_{k=1}^m R_k(x_{i_k},x_{j_k})$
where $R_k$ is a binary relation. Expressing each $R_k$ by a Boolean
formula in conjunctive normal form, we obtain an instance of $\SAT{2}$ 
consisting of $O(m)$ clauses, which can be solved in polynomial time.

\item[\textnormal{(b) Unique Games ($\HH{$\UG{2}$}_{\CV}$)}] \label{itm:UG2} \hfill \\
$\UG{2}$ is a $\CSP$ with $w = 2, \; q = 2,$ and
$\cc{R} = \{ \pi : \ALPHA{2} \to \ALPHA{2} \mid \pi  \text{ is a permutation}\}.$ The 
only permutations in $\cc{R}_{\UG{2}}$ enforce that either $\alpha_u = \alpha_v$ or 
$\alpha_u = \alpha_v \oplus 1$ for an edge $e = (u, v)$. Both of these relations 
can be represented as binary boolean relations. Hence, $\UG{2}$ is an instance of 
$\SAT{2}$ and from Theorem~\ref{thm:cv-p}$(a)$, $\HH{$\UG{2}$}_{\CV}$ is in $\P$.

\item[\textnormal{(c) Equality/Isomorphism to a member in a fixed class of 
graphs}] \hfill \\ \label{itm:graph-eq}
We define the $\HH{$\EQ{\cc{K}}$}_{\CV}$ problem in more detail. 
Let $\cc{K}$ be a class of graphs on $n$ vertices. 
We define $\cc{P}_\cc{K} : \{0, 1\}^{n \choose 2} \to \{T, F\}$ 
as the graph property of being equal to a graph from $\cc{K}$. 
Correspondingly, $W = \cc{K}$.

Formally, for $\cc{P}_\cc{K}$ we consider the $\CSP$ $\EQ{\cc{K}}$ with
$w = 2, \; q = 1, \; W = \{\alpha \in \{0, 1\}^{n \choose 2} \mid 
\alpha \in \cc{K}\},\; \ell = {n \choose 2},$ and $\cc{R} = \{ \Id, \neg \}.$ 
Given a graph instance $G = (V, E)$ in this model, the $n \choose 2$ constraints for
$G_2$ are such that $C_e = \Id(\alpha_e) \text{ for } e \in E$ 
and $C_e = \neg(\alpha_e)$ otherwise. 
This implies that $\bigcup \cc{R} = \{ \Id, \neg, \true \}$ and instances
of $\bigcup\EQ{\cc{K}}$ are parametrized with graphs (sets of edges) 
$E_1\subseteq E_2$. Here $E_2$ is the the set of unordered pairs $e$
for which the constraint is either $\Id$ or $\true$ while $E_1$ consists of
pairs for which the constraint is $\Id$. The $\bigcup\EQ{\cc{K}}$-problem 
becomes then:
given sets $E_1, E_2$ such that $E_1 \subseteq E_2$, does there exist a graph 
$G' =(V,E')\in \cc{K}$ 
such that $E_1 \subseteq E' \subseteq E_2$?

From Theorem~\ref{thm:UandH}, the complexity of $\HH{$\EQ{\cc{K}}$}$, can be analyzed by considering the complexity of
$\bigcup\EQ{\cc{K}}$. Below, we analyze the complexity of $\bigcup\EQ{\cc{K}}$ when $\cc{K}$ is the class of spanning trees 
on $n$ vertices.

\begin{remark}
\label{rem:sub}
For any $\cc{K}$, if we take $E_1 = \emptyset$, then solving $\EQ{\cc{K}}$ becomes equivalent to finding out if there exists $G \in \cc{K}$ which is a subgraph of $E_2$.
\end{remark}

\begin{remark}
\label{rem:iso}
Note that if we assume that $\cc{K}$ is the set of all graphs isomorphic to some $G_0$ and $E_1 = E_2$ as arbitrary graphs on $n$ vertices, then solving $\EQ{\cc{K}}$ becomes equivalent to finding out if $E_2$ is isomorphic to $G_0.$
\end{remark}

\emph{Proof for Equality to a Spanning Tree} ($\HH{$\EQ{\SF{ST}}$}_{\CV}$):
Here, $\cc{K}$ is the set of all possible spanning trees on $n$ vertices and $E_1$ without loss of generality is a 
forest $F$. $E_2$ is any arbitrary graph on $n$ vertices containing $E_1$. In this case, 
the $\bigcup{\EQ{\cc{K}}}$ problem becomes equivalent to finding a spanning tree on $E_2$ 
which also contains the forest $F$. This problem is in $P$ which implies that the
$\HH{$\EQ{\SF{ST}}$}_{\CV}$ problem is also in $P$.
\end{description}
This completes the proof for Theorem~\ref{thm:cv-p}.
\end{proof}

\begin{proof}[Proof of Theorem ~\ref{thm:cv-np}] We show that the following problems in the hidden model with the constraint and variable index revealing oracle are $\NP$-hard. Using Theorem~\ref{thm:UandH}, the complexity of each 
$\HH{$\SP$}_{\CV}$ is analyzed by considering the complexity of $\bigcup \SP$.

\begin{description}
\item[\textnormal{(a) Deltas on Triplets ($\HH{$\Delta$}_{\CV}$)}] \hfill \\
By definition, each relation in $\cc{R}_{\Delta}$ identifies a boolean string 
on $3$ variables. This implies that $\bigcup \cc{R}_{\Delta}$ forms the set 
of all Boolean predicates on $3$ variables. Thus, $\SAT{3}$ can be 
expressed as the $\bigcup \Delta$ problem. Hence, from 
Theorem~\ref{thm:UandH}(b), $\HH{$\Delta$}_{\CV}$ is $\NP$-hard.

%

\item[\textnormal{(b) Hyperplane Non-Cover ($\HH{$\SF{HYP\!-\!NC}$}_{\CV}$)}] 
\label{itm:subgrp} \hfill \\
The Hyperplane Non-Cover problem ($\SF{HYP\!-\!NC}$) is the solvability 
of homogeneous linear in-equations in $F_p^N$. The $\SF{HYP-NC}$ problem over
$Z_p^N$ for $p\geq 3$ 
includes the $\COL{3}$ problem and is already $\NP$-hard. To see this, let
$E$ be a graph on vertex set $[N]$ and consider the in-equations $x_i\neq x_j$ for
indices $i,j\in [N]$ such that 
$\{i,j\}\in E$. If $p=3$ that's all we need. Otherwise, add a variable $y$ 
together with the in-equations $y\neq 0$, $x_i\neq ky$ for $i\in [N]$ and $k\in
[p-3]$. 

Hence, it remains to consider the 
$\HH{$\SF{HYP\!-\!NC}$}_{\CV}$ problem over $F_2^N$, which we need to examine 
$\bigcup \SF{HYP\!-\!NC}$ by Theorem~\ref{thm:UandH}. In this setting, let $T$ be 
the set of all subspaces (not necessarily hyperplanes) of $F_2^N$. Then the set of 
constraints of
$\bigcup \SF{HYP\!-\!NC}$ consists of $\{R_P\mid P\in T\}$ where $R_P(a)$ 
evaluates to $\true$ if and only if $a\not\in P$. 
This problem 
is $\NP$-hard, as it includes non-covering by subspaces
 of codimension $2$
which encompasses the $\COL{4}$ problem. Hence, the former will be 
$\NP$-hard using Theorem~\ref{thm:UandH}.

\item[\textnormal{(c) Unique games ($\HH{$\UG{k}$}_{\CV}$ for $k \geq 3$)}] \hfill \\
$\UG{3}$ is a $\CSP$ with $w= 3, \; q = 2,$ and
$\cc{R} = \{ \pi : \ALPHA{3} \to \ALPHA{3} \mid \pi  \text{ is a permutation}\}.$
Let \[R^\circ := \mathop{\bigcup}_{\pi: (\forall i)(\pi(i) \neq i)} \pi .\] Note that 
$R^\circ \in \bigcup\cc{R}.$ Choosing $R^\circ$ as the constraint for every edge gives 
us the $\COL{3}$ problem. Hence, from Theorem~\ref{thm:UandH}(b),  
$\HH{$\UG{3}$}_{\CV}$ is $\NP$-hard.

\begin{remark}
Our proof method also shows that $\HH{$\UG{k}$}_{\CV}$ is $\NP$-hard for any $k > 2.$
\end{remark}

\item[\textnormal{(d) $k$-Clique Isomorphism ($\HH{$\CISO{k}$}_{\{\cc{V}\}}$) for
$0.1 n \leq k \leq 0.9 n$}] \label{itm:k-clique-iso} \hfill \\
Obviously, replacing the constraints 
of type ${R}_{\leq k}$ by $R_{\leq k}\cup \neg R_{\leq k}$ 
in an instance of $\HH{$\CISO{k}$}$
we obtain an instance of $\bigcup$$\HH{$\CISO{k}$}$.
This however just means omitting Constraints (1), and we obtain the 
$\CLQ{k}$ problem (deciding whether the graph contains a $k$-clique) which is $\NP$-hard. 
Hence from Theorem~\ref{thm:UandH}(b),  the $\HH{$\CISO{k}$}_{\{\cc{V}\}}$ problem is $\NP$-hard. 

\item[\textnormal{(e) Equality to a $k$-Clique ($\HH{$\EQ{\CLQ{k}}$}_{\CV}$) for
$0.1 n \leq k \leq 0.9 n$}] \hfill \\
We use the framework defined in the previous proof for the 
$\HH{$\EQ{\SF{ST}}$}_{\CV}$ problem.
As mentioned in Remark~\ref{rem:sub}, given a graph $E_2$, 
consider $\cc{K}$ to be the set of all possible $k$-cliques on $n$ vertices 
and $E_1=\emptyset$. In this setting, the 
$\bigcup\EQ{\cc{K}}$ problem 
is equivalent to finding a $k$-clique on $E_2$ which is $\NP$-hard.

\begin{remark}
The above proof could also serve as an alternate proof for Theorem~\ref{thm:cv-np}(d).
\end{remark}

\item[\textnormal{(f) Equality to a Hamiltonian Cycle ($\HH{$\EQ{\SF{HAMC}}$}_{\CV}$)}] \hfill \\
We use the framework defined in the previous proof for the $\HH{$\EQ{\SF{ST}}$}_{\CV}$ problem.
Here, $\cc{K}$ is the set of all possible Hamiltonian cycles on $n$ vertices and
$E_1=\emptyset$. For an arbitrary graph $E_2$.  
the $\bigcup{\EQ{\cc{K}}}$ problem parametrized by $E_1$ and $E_2$
becomes equivalent to deciding if $E_2$ has a Hamiltonian cycle, which is $\NP$-hard.
\end{description}
This completes the proof for Theorem~\ref{thm:cv-np}.
\end{proof}


\section{Constraint-index and Relation Revealing Oracle}
\label{sec:CR}

\begin{theorem} 
\label{c_boring}
Let $\SP$ be a $\CSP$ with constant arity $q$
and constant alphabet size $w$.
Assume that
for every $\alpha\in \ALPHA{w}$, there is  a non-empty relation $R_{\alpha} \in \cc{R}$ 
such that $(\alpha,\ldots,\alpha) \not\in R_{\alpha}$.
Then, $\HH{\SP}_{\CR}$ is $\NP$-hard.
\end{theorem}

\begin{proof}
We show that $\E{$\SP$}$ is $\NP$-hard. We will reduce to it the problem $\E{$\SAT{3}$}$ which consists of those
instances of $\SAT{3}$ where in each clause either every variable is positive, or every variable is negated. Restricting to these instances of $\SAT{3}$ is known as  
$\newsat$, whose $\NP$-completeness can be deduced, for example, from Schaefer's characterization~\cite{Sch78}.

We first extend the relations for $\SP$ as follows. Let $q'=(w-1)q+1$ and let $\cc{R}'\subseteq \ALPHA{w}^{q'}$ 
be the set of $q'$-ary relations that can be obtained
as an extension of an element of $\cc{R} \setminus \{\emptyset\}$ 
from any $q$ coordinates. 
Since $q$ and $w$ are constant, the cardinality of $\cc{R}'$ is also constant. We claim that 
$\bigcap_{R\in \cc{R}'} R=\emptyset$. Indeed, 
every $a\in \ALPHA{w}^{q'}$ has a sub-sequence $(\alpha,\ldots,\alpha)$ of
length $q$ for some $\alpha \in \ALPHA{w}$, therefore the extension of $R_{\alpha}$ from these $q$ coordinates
does not contain $a$.
Let $\{R^0,R^1,\ldots,R^h\}$ be a minimal subset of $\cc{R}'$
such that $\bigcap_{i=0}^h R^i=\emptyset$. Since the empty relation is not in
$\cc{R}'$, we have  $h\geq 1$.
Let us set $A^0 =  \bigcap_{i \neq 1} R^i$ and $A^1 =  \bigcap_{i \neq 0} R^i$. Then $A^0 \bigcap A^1 = \emptyset$,
and because of the minimality condition, $A^0 \neq \emptyset$ and $A^1 \neq \emptyset$. 

For a boolean variable $x$, we will use the notation $x^1 = x$ and $x^0 = {\bar x}$. The main idea of the proof
is to encode a boolean variable $x^1$ by the relation $A^1$ and $x^0$ by $A^0$. We think about the elements of $A^1$ as satisfying $x^1$,
and about the elements of $A^0$ as satisfying $x^0$. Then $x^1$ and $x^0$ can be both satisfied, but not simultaneously.

To implement the above idea, we extend the relations further, building on the above extension. We suppose without loss of generality that $\ell$ is a multiple of $q'$, and we set $\ell' = \ell/q'$.
Since $q'$ is constant, $\newsat$ on $\ell'$ variables is still NP-hard.
We take $ \ell' $ pairwise disjoint
blocks of size $q'$ of the index set $[\ell]$ and
on each block we consider relations $R^0,\ldots,R^h$.
We denote by $R^i_k$ the $\ell$-ary relation which
is obtained by extending $R^i$ from the $k$th block.
Observe that the relations $R^i_k$ are just extensions of elements of $\cc{R}$.

After these preparations, we are ready to present the construction. Let $K =  \bigwedge_{t=1}^{u} {K}_t$ be an instance of  $\E{$\SAT{3}$}$
in $\ell'$ variables, with each 3-clause of the form
$K_t = x_{i_{t,1}}^{b_t}  \vee x_{i_{t,2}}^{b_t} \vee x_{i_{t,3}}^{b_t},$
where $i_{t,1}, i_{t,2}, i_{t,3}$ are indices 
from $[\ell']$ and $b_t$ is either 0 or 1.
Then we map $K$ to the instance $\cc{C}$ whose constraints are
$$R_{i_{t,1}}^{b_t}  \cup R_{i_{t,2}}^{b_t} \cup R_{i_{t,3}}^{b_t},$$
for each $t\in [u]$, and
$$C^j_k=R^j_k,$$
for each  $k \in [\ell']$ 
and $j \in \{2,\ldots,h\}$.
This is an instance of $\E{$\SP$}$ since the three relations $R_{i_{t,1}}^{b_t}$,  $R_{i_{t,2}}^{b_t}$  and $R_{i_{t,3}}^{b_t}$
are the extensions of the same relation in $\cc{R}$. It is quite easy to see that
$K$ is satisfiable if and only if $\cc{C}$ is satisfiable.
Indeed, a satisfying assignment $a$ for the
$\cc{C}$ can be translated to a
satisfying assignment for $K$ by assigning
$0$ or $1$ to $x_k$ according to whether the $k$th block
of $a$ was in $A^0_k$ or $A^1_k$ (taking an arbitrary
value if it was in none of the two). Similarly, a satisfying assignment $b$ for $K$  can be translated to a
satisfying assignment $a$ for  $\cc{C}$ by picking any element of $A^{b_k}_k$ for the $k$th block of $a$.
\end{proof}

An immediate consequence is that under the same conditions $\HH{$\SP$}_{\onlyC}$ is $\NP$-hard too.  
For an application of this consequence, 
let $\LINEQ$ stand for the $\CSP$ in which that alphabet
is identified with a finite field $F$ and the $\ell$-ary constraints are linear
equations over $F$. 

\begin{claim}\label{claim:hard}
$\HH{$\LINEQ$}_{\onlyC}$  is $\NP$-hard.
\end{claim}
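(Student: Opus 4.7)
The plan is to apply the $\onlyC$-version of Theorem \ref{c_boring} to a constant-arity restriction of $\LINEQ$ and then to embed this restriction back into $\LINEQ$. I would fix $F = \GF(2)$ and define $\LINEQ_3$ to be the $\CSP$ with alphabet size $w = 2$, arity $q = 3$, and type consisting of all three-variable linear equations $a_1 x_1 + a_2 x_2 + a_3 x_3 = b$ over $\GF(2)$. Both the arity and alphabet size are constant, so the consequence of Theorem \ref{c_boring} for $\onlyC$-revealing oracles is applicable once the hypothesis is verified.

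To verify the hypothesis, for each $\alpha \in \{0,1\}$ I would exhibit a non-empty relation in the type of $\LINEQ_3$ that avoids the tuple $(\alpha,\alpha,\alpha)$. For $\alpha = 0$, the equation $x_1 + x_2 + x_3 = 1$ works: it contains $(1,0,0)$ and excludes $(0,0,0)$. For $\alpha = 1$, the equation $x_1 + x_2 + x_3 = 0$ works: it contains $(0,0,0)$ and excludes $(1,1,1)$. Hence $\HH{$\LINEQ_3$}_{\onlyC}$ is $\NP$-hard.

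To transfer this hardness to $\LINEQ$, I would pad each constraint of a $\LINEQ_3$ instance. A constraint $R(x_{j_1},x_{j_2},x_{j_3})$ with coefficients $(a_1,a_2,a_3)$ and right-hand side $b$ becomes the $\ell$-ary linear equation $a_1 x_{j_1} + a_2 x_{j_2} + a_3 x_{j_3} = b$, viewed as an element of $\LINEQ$'s type by setting all other coefficients to zero. The padding preserves satisfying assignments, so a $\onlyC$-revealing oracle for a $\LINEQ_3$ instance simultaneously serves as a $\onlyC$-revealing oracle for the padded $\LINEQ$ instance: the same constraints are violated on the same assignments, and only the common constraint index is disclosed. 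Consequently any solver for $\HH{$\LINEQ$}_{\onlyC}$ also solves $\HH{$\LINEQ_3$}_{\onlyC}$, yielding the claimed hardness. The only delicate point is confirming that the trial-and-error semantics survives the embedding; it does, precisely because the $\onlyC$-revealing oracle never discloses anything beyond a constraint index, so the arity mismatch between $\LINEQ_3$ and $\LINEQ$ leaks no information to the algorithm.
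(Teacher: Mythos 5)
Your proof is correct and follows essentially the same route as the paper: both arguments invoke the $\onlyC$ consequence of Theorem~\ref{c_boring} for a constant-arity, constant-alphabet family of linear equations and then embed it into $\LINEQ$ by viewing each short equation as an $\ell$-ary one with zero coefficients elsewhere (the paper simply uses the unary equations $x_i=0$ and $x_i=1$, i.e.\ the extensions of $\{0\}$ and $\{1\}$, instead of your ternary parity equations). The only caveat is that you fix $F=\GF(2)$ while the claim concerns an arbitrary finite field $F$; this is harmless since your argument adapts immediately (e.g.\ take the equation $x_1-x_2=1$, which is non-empty and excludes every constant tuple, as $R_\alpha$ for all $\alpha\in F$), and the paper's unary choice works over any $F$ directly.
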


\begin{proof}
For each $i\in\ell$, we
pick two equations: $x_i=0$ and $x_i=1$. 
Observe that $x_i=0$ is the same
as $\{0\}^i$, the $\ell$-ary extension
of the unary relation $\{0\}$ on
the $i$th position and we have the same
if we replace $0$ by $1$. By the above observation, the 
$\HCSP$s built from relations of these type
are $\NP$-hard.
\end{proof}









\section{Monotone graph properties}
\label{sec:onlyC}

In this section, we consider monotone graph properties in the context of 
constraint index and relation revealing oracles. 

Let us 
first formulate 
monotone graph properties as $\CSP$s. Recall that
a {\em monotone graph property}
of an $n$-vertex graph is a monotone Boolean function $\cc{P}$ on $\binom{n}{2}$ 
variables indexed by $\{(i, j), 1\leq i<j\leq n\}$, 
invariant under the induced action of $S_n$ on $[n]$. The goal is to decide, given 
a graph $G = (V, E)$, whether $E \in 
\cc{P}$. 


For a monotone graph property $\cc{P}$, we turn it into a $\CSP$ problem as 
follows. Given 
a graph $G$, to decide whether $G$ satisfies the property $\cc{P}$, we are asked
to propose a minimal graph $H$ satisfying $\cc{P}$ such that $H$ is a subgraph of 
$G$. That is, $G$ is thought of as an instance, and $H$ is an assignment. 
If $H$ is a subgraph of $G$ then $H$ is considered as satisfying. If not, a 
violation is given by an edge in $H$ but not in $G$. That is, the set of 
constraints in 
the instance $G$ consists of negations of 
the variables corresponding to the
edges {\em not} in $G$. 
Formally, 
the $\CSP$ $\SP_{\cc{P}}$ associated with $\cc{P}$ has parameters
$w = 2, \; q = 1,$ $\ell = {\binom{n}{2}},$ $W_\cc{P}$
which consists of the (characteristic vectors of) the graphs (sets of edges),
minimal for inclusion satisfying $\cc{P}$,  and $\cc{R} = \{ \Neg \}$, 
where $\Neg$ is the negation function. The corresponding constraints are $\Neg(e)$ 
for every $e \notin E$. A graph $G=(V, E)$ yields an instance consisting of 
$\Neg(e)$ for $e\not\in G$. 
Then the goal is the same as to decide, given whether there exists an $A \in 
W_\cc{P}$ 
such that $A \subseteq E$. 
Notice that the graph property specific part of this model is {\em fully} left to
the specification of the set $W_\cc{P}$ of admissible assignments.

Let us examine such $\CSP$s in the trial and error setting. As there is only one 
relation $\cc{R}=\{\Neg\}$, the models revealing or not revealing the violated 
relation type are equivalent, so naturally the interesting setting is to work with 
the constraint index and 
relation 
revealing oracle. In particular, as already noted in the Introduction,
the case when the variable is also revealed is polynomial time equivalent
to the ``normal'' 
 problem. 
Also note that the union of arity 
extension and the restricted union of arity extension are the same. Then the 
arity 
extension is
\X{$\cc{R}$} $= \{ \Neg^e \mid e \in {n \choose 
2} \}$, where
$\Neg^e(\alpha_1, \ldots, \alpha_{n \choose 2}) = \neg \alpha_e$, and therefore 
$\bigcup$\X{$\SP_{\cc{P}}$}$=\{\vee_{e\in E'}\Neg^e\mid 
E'\subseteq 
\binom{n}{2}\}$ where $\vee$ denotes the or operator. That is, the relations in 
$\bigcup$\X{$\SP_{\cc{P}}$} are parametrized by sets $E'$ 
of possible
edges where the relation corresponding to $E'$ is equivalent to that
at least one edge of the proposed graph is not there in $E'$. 

As a consequence, the $\bigcup$\X{$\SP_{\cc{P}}$} problem becomes the following:
Given a graph $G = (V, E),$ and edge sets on $n$ vertices
$E_1, \ldots, E_m \subseteq {[n] \choose 2}$, 
does there exist an $A \in {W_\cc{P}}$ such that $ A \subseteq E$ and 
$A$ excludes at least one edge from each $E_i$? While every subset of the edge set
$\binom{n}{2}$ is in our disposal, in actual applications, the tricky part is to 
come up with appropriate subsets $E_1,\ldots,E_m$, which, together with the 
minimal 
instances of 
the graph property in question, yield that the resulting 
$\bigcup$\X{$\SP_{\cc{P}}$} problem is hard. This will be the main
content in  our proofs for the various parts of Theorem~\ref{thm:mono-graph}.

This framework naturally extends to directed and bipartite graphs as well
as to graphs with one or more designated
vertices. 
Monotone decreasing 
properties 
can be treated by replacing $\Neg$ with $\Id$, the identity function.

From Theorem~\ref{c_reduction}, the complexity of ${\HH{\SP}_{\cc{P}}}_{\emptyset}$
can be analyzed by considering the complexity of $\bigcup$\X{$\SP_{\cc{P}}$}.
We do this for the following graph properties (named after the minimal
satisfying graphs): 

\begin{enumerate}
\item {Spanning Tree ($\SF{ST}$):} the property of containing a spanning
tree, that is, being connected.
\item {Directed Spanning Tree ($\SF{DST}$):} the property of containing a 
directed spanning tree rooted at vertex (say) $1$.
such that all the edges of the spanning tree are directed towards the root.
\item {Undirected Cycle Cover ($\SF{UCC}$):} the property of containing an 
undirected cycle cover (union of vertex disjoint cycles such that every 
vertex belongs to some cycle).
\item {Directed Cycle Cover ($\SF{DCC}$):} the property of containing a 
directed cycle cover (union of vertex disjoint directed cycles such 
that every vertex belongs to some cycle).
\item {Bipartite Perfect Matching ($\SF{BPM}$):} the property of having a 
perfect matching in a bipartite graph.
\item {Directed Path ($\SF{DPATH}$):} the property of containing a directed 
path between two specified vertices $s$ and $t$.
\item {Undirected Path ($\SF{UPATH}$):} the property of containing an 
undirected path between two specified vertices $s$ and $t$.
\end{enumerate}

Connectedness of the given graph 
and connectivity of two designated vertices, as 
well as
 their directed versions, belong to the simplest well-known monotone graph 
 properties that are 
decidable in polynomial time. Having perfect matchings is perhaps the most famous property 
in $\P$ for bipartite graphs. We show that these problems become $\NP$-hard
in the hidden setting. We included the cycle cover problems because we prove
hardness of having perfect matchings through hardness of
$\HH{$\SF{DCC}$}_\emptyset$. The
hidden version of having a fixed subgraph, e.g., a clique of constant size 
is in $\P$ because there are only polynomially minimal satisfying graphs and they
can be efficiently listed. Unfortunately we are not aware of any monotone
property which remains efficiently decidable in the hidden setting for a less trivial
reason. 

 \begin{theorem}
 \label{thm:mono-graph}
 The following problems are $\NP$-hard:
  (1) $\HH{$\SF{ST}$}_\emptyset$,
(2) $\HH{$\SF{DST}$}_\emptyset$,  
(3)  $\HH{$\SF{UCC}$}_\emptyset$,
 (4) $\HH{$\SF{DCC}$}_\emptyset$,
 (5)  $\HH{$\SF{BPM}$}_\emptyset$, 
  (6) $\HH{$\SF{DPATH}$}_\emptyset$, 
(7)  $\HH{$\SF{UPATH}$}_\emptyset$.
 \end{theorem}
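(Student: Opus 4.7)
My plan is to apply Theorem~\ref{thm:UEH}(b) to each listed property $\cc{P}$: a polynomial-time algorithm for $\HH{$S$}_\emptyset$ would yield one for $\bigcup$ \E{$S$} (taking $S=\SP_\cc{P}$; the overhead in Theorem~\ref{thm:UEH}(b) is polynomial since $w,q,s$ are constant and all relations involved are disjunctions of literals), hence it suffices to prove $\NP$-hardness of $\bigcup$ \E{$S$}. Using the identity \E{$\cc{R}$} $\subseteq \bigcup$ \X{$\cc{R}$} $= \bigcup$ \E{$\cc{R}$} recorded in Section~\ref{sec:prelims}, the problem to be shown hard is the one already spelled out in the excerpt: given $G=(V,E)$ together with $E_1,\dots,E_m\subseteq\binom{[n]}{2}$, decide whether some $A\in W_\cc{P}$ satisfies $A\subseteq E$ and $E_i\not\subseteq A$ for every $i$. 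Informally, the $E_i$'s act as ``forbidden edge subsets'' that a minimal $\cc{P}$-witness $A$ inside $G$ must not contain in full.

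For $\SF{UPATH}$ and $\SF{DPATH}$ (items (7) and (6)), $\NP$-hardness follows immediately by restricting to $|E_i|=2$: this is precisely the classical forbidden-pair $s$-$t$ path problem of Gabow, Maheshwari and Osterweil, known to be $\NP$-hard in both the directed and undirected settings. For each of the remaining five properties I plan a uniform reduction from $\SAT{3}$. For every Boolean variable $x_i$ I embed in $G$ a small gadget that forces any admissible $A$ to pick exactly one of two distinguished edges $e_i^\top$ and $e_i^\bot$ (encoding the truth value of $x_i$); for each clause $c_j=\ell_{j,1}\vee\ell_{j,2}\vee\ell_{j,3}$ I then add a single size-three forbidden set $E_j$ consisting of the three edges that \emph{falsify} the literals $\ell_{j,1},\ell_{j,2},\ell_{j,3}$. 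The condition $E_j \not\subseteq A$ then holds iff at least one literal of $c_j$ is satisfied, so satisfying assignments of the $\SAT{3}$ instance correspond bijectively to admissible witnesses $A$.

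The main obstacle is the per-property gadget engineering, which must guarantee both (i) a genuine binary choice between $e_i^\top$ and $e_i^\bot$ at each variable in every admissible $A$, and (ii) enough rigidity in the rest of $G$ that, once the $n$ binary choices are fixed, the extension to an element of $W_\cc{P}$ inside $E$ is uniquely forced. For $\SF{BPM}$ I use a distinguished vertex $u_i$ of degree two whose only edges $e_i^\top, e_i^\bot$ compete for the matching, the partners being pinned by forced matching edges elsewhere. For $\SF{UCC}$ and $\SF{DCC}$ I use a theta subgraph with two parallel arcs between the same pair of vertices, exactly one of which any (directed) cycle cover must traverse. For $\SF{ST}$ and $\SF{DST}$ I attach each variable vertex to a rigid spanning-tree backbone via two edges $e_i^\top, e_i^\bot$ whose backbone endpoints are already connected, so that only one can be chosen without creating a cycle (in the directed case, in-direction toward the root is additionally imposed). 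Once these gadgets are checked, the clause triples transparently encode $\SAT{3}$, so $\bigcup$ \E{$S$} is $\NP$-hard, and Theorem~\ref{thm:UEH}(b) transfers the hardness to each of the seven hidden problems in the statement.
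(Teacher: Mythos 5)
Your overall strategy is the right one and matches the paper's: by Theorem~\ref{thm:UEH}(b) it suffices to show $\NP$-hardness of the problem ``does some $A \in W_\cc{P}$ with $A \subseteq E$ exclude at least one edge of each $E_i$'', and your treatment of items (6)--(7) via the forbidden-pairs $s$--$t$ path problem is sound (the paper instead uses the crossing-free path problem of \cite{KLN91}, with the $E_i$ being crossing pairs; both work). The genuine gap is in items (1)--(5): the entire hardness content there rests on the variable gadgets, which you explicitly defer (``the main obstacle is the per-property gadget engineering'') and whose one-line descriptions do not withstand scrutiny. Most seriously, for $\HH{$\SF{UCC}$}_\emptyset$ and $\HH{$\SF{DCC}$}_\emptyset$ the proposed ``theta'' gadget does not create a binary choice: in a cycle cover every vertex has degree exactly $2$ (respectively in-degree and out-degree exactly $1$), so if the two branch endpoints have no further incident edges the cover is forced to use \emph{both} branches, and if instead exactly one branch is used, the internal vertices of the unused branch are left uncovered unless they carry additional edges --- which your gadget does not provide (note also that parallel arcs are unavailable, since instances live on $\binom{[n]}{2}$). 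The $\SF{ST}$/$\SF{DST}$ claim ``only one can be chosen without creating a cycle'' is likewise false as stated: a spanning tree may contain both $e_i^\top$ and $e_i^\bot$, with the variable vertex serving as an internal vertex of the tree path between the two backbone endpoints; and for $\SF{BPM}$, partners that are ``pinned by forced matching edges elsewhere'' could not simultaneously be available to match $u_i$. Some of these can be repaired (e.g.\ the $\SF{ST}$ argument can be salvaged by defining the assignment from the \emph{absent} edges, or by adding the pair $\{e_i^\top,e_i^\bot\}$ as an extra forbidden set), but as written the correctness of the five $\SAT{3}$ reductions is not established, so the proof is incomplete.

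For comparison, the paper sidesteps gadget engineering entirely by observing that the forbidden-set formulation of $\bigcup\X{$\SP_{\cc{P}}$}$ directly expresses known $\NP$-hard restricted variants of each property: forbidding every triple of edges at a vertex turns spanning tree into Hamiltonian path; forbidding the incoming edge pairs in a bounded-degree planar digraph turns directed spanning tree into directed Hamiltonian path; forbidding the edge sets of all short cycles yields the cycle-cover-without-short-cycles problems of \cite{Hell88,GJ79}; and $\SF{BPM}$ reduces to $\SF{DCC}$ via the standard matching/cycle-cover correspondence. If you want to keep your $\SAT{3}$ route, you must fully specify and verify each gadget, in particular giving the cycle-cover gadgets the auxiliary structure needed to cover the vertices of the unused branch; otherwise, switching to reductions from these known restricted variants gives a complete proof with far less work.
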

 
\begin{proof} We show that the following problems in the hidden model with the 
constraint index 
revealing oracle are $\NP$-hard. In each case, we construct an instance of the 
$\bigcup$\X{$\SP_{\cc{P}}$} \st it becomes equivalent to a known $\NP$-hard 
problem and 
using Theorem~\ref{c_reduction}(b) we can conclude that the hidden version, 
$\HH{\SP}_{\cc{P}}$, 
is $\NP$-hard.

\begin{description}
\item[\textnormal{(1) Spanning Tree ($\HH{$\SF{ST}$}_\emptyset$)}] \hfill \\
When $\cc{P}$ is {\em connectedness}, $W_\cc{P}$ is the set of {\em Spanning 
Trees} on $n$-vertices. 

Given $G = (V, E),$ for every vertex $v \in V,$ we consider ${n-1} \choose 3$ edge 
sets 
$E_{vijk }$ where 
\[E_{vijk} := \{\{v, i\},  \{v, j\}, \{v, k\} \} \quad 1 \leq i < j < k \leq n.\] 
With this choice of $E_{vijk}$s the $\bigcup$\X{$\SF{ST}$} problem
asks if there exists a spanning tree in $G$ which avoids at least one edge
from each $E_{vijk}$. This is equivalent to
 that every vertex $v$ is incident to at most two edges of the spanning tree
$A$. Spanning trees with this property are
just Hamiltonian paths in $G$. 

Thus, the $\bigcup$\X{$\SF{ST}$} problem is equivalent to asking if $G$ contains 
a Hamiltonian path \ie the $\SF{HAM\!-\!PATH}$ problem in $G$. Hence, the 
$\NP$-hard 
$\SF{HAM\!-\!PATH}$ in $G$ problem reduces to the $\HH{$\SF{ST}$}_\emptyset$ 
problem in $G$.

\item[\textnormal{(2) Directed Spanning Tree ($\HH{$\SF{DST}$}_{\emptyset}$)}] 
\hfill \\
Similar to the previous case, $W_\cc{P}$ is the set of directed 
spanning trees rooted at vertex $1.$

Let $G = (V, E),$ be a directed planar graph such that the 
in-degree and the out-degree for every vertex is at most $2.$
The $\SF{DHAM\!-\!PATH}$ problem in $G$, \ie determining if 
there exists directed Hamiltonian path ending at node $1$ in 
$G$, is $\NP$-hard \cite{GJ79}.
Our goal is to reduce the $\SF{DHAM\!-\!PATH}$ problem in $G$ to 
the $\HH{$\SF{DST}$}_\emptyset$ problem in $G.$

For every vertex $v \in V,$ of in-degree $2$ we consider the edge sets $E_v$ 
where 
$E_v := \{(i,v) \mid (i, v) \in E\}$ with $|E_v| \leq 2$ by our choice of $G.$
In addition, for every vertex $v \in V,$ of out-degree $2$ we consider the 
edge sets $E^v$ 
where 
$E^v := \{(v,i) \mid (v,i) \in E\}$.
With these $E_{v}$s and $E^v$, the $\bigcup$\X{$\SF{DST}$} problem asks if there 
exists a directed spanning tree 
 rooted at vertex $1$ that contains at most
one edge coming in and at most one edge originated from every vertex. 
These constraints restrict the directed spanning tree, $A$ to be a 
$\SF{DHAM\!-\!PATH}$ in $G$, analogously  
to the undirected case.

Hence, the $\NP$-hard $\SF{DHAM\!-\!PATH}$ problem reduces to the 
$\HH{$\SF{DST}$}_{\emptyset}$ problem in $G$.

\item[\textnormal{(3) Undirected Cycle Cover ($\HH{$\SF{UCC}$}_{\emptyset}$)}] 
\hfill \\
Here, $W_\cc{P}$ is the set of {\em undirected cycle covers} on $n$-vertices. 
 
From Hell et al. \cite{Hell88} we know that 
the problem of deciding whether a graph has a $\SF{UCC}$ that 
does not use the cycles of length, (say) $5$ is $\NP$-hard.
We construct an equivalent instance of $\bigcup$\X{$\SF{UCC}$} 
as follows. We choose the edge sets $E_C := \{ e \mid e \in C\}$ 
ranging over every length $5$ cycle $C$ in $G$. Then, a $\SF{UCC}$ 
satisfying the above conditions cannot contain any $5$-cycles. 

Hence, an $\NP$-hard problem reduces to the $\HH{$\SF{UCC}$}_{\emptyset}$ 
problem in $G$.

\item[\textnormal{(4) Directed Cycle Cover ($\HH{$\SF{DCC}$}_{\emptyset}$)}] 
\hfill \\
In this case, $W_\cc{P}$ is the set of {\em directed cycle covers} on 
$n$-vertices. 

The proof follows similar to the undirected case. The 
$\NP$-hard problem we are interested in is determining 
if a  graph has a $\SF{DCC}$ that does not use cycles of length 
$1$ and $2$ \cite{GJ79}. This problem can be expressed 
as $\bigcup$\X{$\SF{DCC}$} by choosing the edge sets 
$E_C := \{ e \mid e \in C\}$ for every length $1$ and 
length $2$ cycle $C$ in $G$.

\item[\textnormal{(5) Bipartite Perfect Matching ($\HH{$\SF{BPM}$}_{\emptyset}$)}] 
\hfill \\
Here, $W_\cc{P}$ is the set of {\em perfect matchings} in a complete 
bipartite graph with $n$-vertices on each side. 

There is a one-to-one correspondence between perfect matchings 
in a bipartite graph $G = (A \cup B, E)$  with $n$ vertices on 
each side and the directed cycle covers in a graph $G' = (V', E')$ 
on $n$ vertices. Every edge $(i, j) \in E'$ corresponds to an 
undirected edge $\{i_A, j_B\} \in E.$ With this correspondence 
the $\HH{$\SF{BPM}$}_{\emptyset}$ problem in $G$ is equivalent 
to the $\HH{$\SF{DCC}$}_{\emptyset}$ problem in $G'.$  Thus, from 
Theorem~\ref{thm:mono-graph}(4) the former becomes $\NP$-hard.
 
\item[\textnormal{(6) Directed Path ($\HH{$\SF{DPATH}$}_{\emptyset}$)}] \hfill \\
We consider $W_\cc{P}$ as the set of {\em directed paths} from $s$ to $t$. 

It is known that given a layout of a directed graph on a plane 
possibly 
containing crossings, the problem of deciding whether there is a 
crossing-free path from $s$ to $t$ is $\NP$-hard \cite{KLN91}. 
This condition can be expressed by picking the each edge set 
$E_i$ as the set of pairs of edges that cross.

\item[\textnormal{(7) Undirected Path ($\HH{$\SF{UPATH}$}_{\emptyset}$)}] \hfill \\
In this case, $W_\cc{P}$ is the set of {\em undirected paths} from $s$ to $t$. 

We can apply the same proof method as the one used for the 
$\HH{$\SF{DPATH}$}_{\emptyset}$ problem on an undirected graph.

\end{description}

This completes the proof for Theorem~\ref{thm:mono-graph}.
\end{proof}



\section{Hidden CSPs with Promise on Instances}
\label{sec:promise}

In this section we consider an extension of the $\HH{$\CSP$}$ framework where the 
instances satisfy some property. For the sake of simplicity, we develop this subject only for the constraint index
revealing model. Formally, let $\SP$ be a $\CSP$, and let $\prom$ be a subset of all instances.
Then $\SP$ with {\em promise} $\prom$ is the $\CSP$  $\SP^{\prom}$ whose instances are only elements of  $\prom$.
One such property is {\em repetition freeness}  where 
the constraints of an instance are pairwise distinct.  We denote by $\rf$ the subset of instances satisfying this property.
For example 
$\SAT{1}^{\rf}$, (as well as $\HH{$\SAT{1}^{\rf}$}$) consists of
pairwise distinct literals. 
Such a requirement is quite natural in the context of certain graph
problems where the constraints are inclusion (or non-inclusion)
of possible edges.
The promise $\HCSP$s framework could also be suitable
for discussing certain graph problems on
special classes of graphs (e.g, connected graphs, planar
graphs, etc.). 

We would like to prove an analogue of the transfer theorem with promise.
Let us be given a promise $\prom$ for the $\CSP$ $\SP$ of type $\cc{R} = \{R_1, \ldots , R_s\}$. 
The corresponding
promise $\bigcup \prom$ for $\bigcup \SP$ is defined quite naturally as follows.
We say that an instance $\cc{C} = (C_1,\ldots,C_m)$ of  $\SP$, where 
$C_j = R_{k_j}(x_{j_1}, \ldots, x_{j_q})$, 
is {\em included} in an instance
$\cc{C'} = (C_1',\ldots,C_m')$ of  $\bigcup \SP$ if 
for every $j=1,\ldots,m$ 
$C_j'=Q_j(x_{j_1},\ldots,x_{j_q})$
for $Q_j\in \bigcup \cc{R}$ such that $R_{k_j}\subseteq Q_j$.
Then $\bigcup \prom$ is defined as the set of instances in $\cc{C'} \in \bigcup \SP$  which 
include $\cc{C} \in \prom$.
In order for the transfer theorem to work, we relax the notion of a solution.
A {\em solution under promise} for $ \cc{C'} \in \bigcup \prom$ has to satisfy two criteria:
it is a satisfying assignment when $\cc{C'}$ includes a satisfiable instance $\cc{C} \in \prom$, and it is $\exc$ when
$\cc{C'}$ is unsatisfiable. However, when all the instances
$\cc{C} \in \prom$ included in $\cc{C'}$
are unsatisfiable but
$\cc{C'}$ is still satisfiable, it can be either a satisfying assignment or $\exc$. We say that an algorithm {\em solves}
$\bigcup \SP^{\bigcup \prom}$ {\em under promise} if $\forall \cc{C'} \in 
\bigcup \prom$, it outputs a solution 
under promise.

Using the above definition in the transfer theorem's proof allows the algorithm for $\HH{S}_{\{\cc{V}\}}$
to terminate, at any moment of time, with the conclusion $\no$ 
as soon as it gets enough information about the instance to exclude satisfiability 
and without making further calls to the revealing oracle.
In some ambiguous cases, it can still call the oracle with an assignment
which satisfies the $\bigcup \SP$-instance. Other cases when the 
satisfiability of a $\bigcup \SP$-instance with promise implies the existence of a satisfiable 
promise-included instance lack this ambiguity.
With these notions the proof of Theorem~\ref{cv_reduction}
goes through and we obtain the following.

\begin{theorem}
 \label{promise_cv_reduction}
Let $\SP^{\prom}$ be a promise $\CSP$.
(a) 
If $\bigcup S^{\bigcup \prom}$ is solvable under promise in time $T$ then $\HH{S}_{\{\cc{V}\}}^{\prom}$  is solvable in time
$O((T + s \times \comp(\cc{R})) \times m \times \min\{\dim (\bigcup \cc{R}), |W_q|\})$. \\
(b) If 
$\HH{S}_{\{\cc{V}\}}^{\prom}$ is solvable in time $T$ 
then $\bigcup S^{\bigcup \prom}$  is solvable under promise in time $O(T \times m \times \comp(\bigcup \cc{R}))$.
\end{theorem}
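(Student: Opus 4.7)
The plan is to transfer the proof of Theorem~\ref{cv_reduction} essentially verbatim, checking at each step that the promise structure is preserved so the paired algorithm's ``under promise'' guarantee can be invoked. No new structural ideas are needed; the substance lies in tracking how the hidden $\prom$-instance and the enlarged $\bigcup\prom$-instance fed to (resp.\ simulated for) the paired algorithm are linked by the inclusion $C_j \subseteq C_j'$ that defines $\bigcup \prom$.

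For (a), I would take an algorithm $\cc{A}$ solving $\bigcup \SP^{\bigcup \prom}$ under promise and build $\cc{B}$ for $\HH{S}_{\{\cc{V}\}}^{\prom}$ as in the unpromised case: maintain forbidden sets $A_j^t \subseteq W_q$ together with the once-revealed variable tuples, then submit to $\cc{A}$ the instance with $C_j^t = \bigcup_{R \in \cc{R} : R \cap A_j^t = \emptyset} R(x_{j_1^t}, \ldots , x_{j_q^t})$. The new observation is that the hidden $\cc{C} \in \prom$ is always included in $\cc{C}^t$: every element added to $A_j^t$ witnesses a violation of $C_j$ at its revealed variable tuple and so lies outside $C_j$, which forces $C_j \subseteq C_j^t$. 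Hence $\cc{C}^t \in \bigcup \prom$ at every step, and $\cc{A}$'s guarantee applies. If $\cc{A}$ returns $\exc$, then the inclusion of a satisfiable $\cc{C}$ would have forced $\cc{A}$ to return a satisfying assignment, so $\cc{C}$ must be unsatisfiable and $\cc{B}$ safely outputs $\no$. If $\cc{A}$ returns a satisfying assignment, $\cc{B}$ tests it via $\V$ and either halts or adds one new forbidden tuple, inheriting the termination and complexity bounds unchanged.

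For (b), given $\cc{A}$ for $\HH{S}_{\{\cc{V}\}}^{\prom}$, I would simulate a $\{\cc{V}\}$-revealing oracle from a $\bigcup \SP^{\bigcup \prom}$-instance $\cc{C}'$ as before: on query $a$, answer $\yes$ if $a$ satisfies $\cc{C}'$, otherwise return the index and variable tuple of some violated $C_j'$. Legitimacy for $\HH{S}_{\{\cc{V}\}}^{\prom}$ follows by fixing any $\cc{C} \in \prom$ included in $\cc{C}'$ (which exists since $\cc{C}' \in \bigcup \prom$); the inclusion $C_j \subseteq C_j'$ ensures that every violation of $C_j'$ is simultaneously a violation of $C_j$, so the whole transcript is consistent with $\cc{C}$ as the hidden instance. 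Consequently, if $\cc{A}$ returns $\no$, then every $\prom$-instance consistent with the transcript, and in particular every $\prom$-instance included in $\cc{C}'$, is unsatisfiable, making $\exc$ a valid solution under promise; if $\cc{A}$ returns a satisfying assignment, that assignment satisfies $\cc{C}'$ by construction. The main obstacle in both directions is exactly these promise-compatibility checks and the translation of $\cc{A}$'s $\exc$ and $\no$ outputs, all of which reduce to the single inclusion $C_j \subseteq C_j'$ that defines $\bigcup \prom$ together with the relaxed solution-under-promise convention.
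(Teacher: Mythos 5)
Your proposal is correct and follows exactly the route the paper takes: the paper proves Theorem~\ref{promise_cv_reduction} by observing that the proof of Theorem~\ref{cv_reduction} goes through once solutions are interpreted ``under promise,'' and your argument is that same adaptation, with the promise-compatibility checks (the hidden $\prom$-instance stays included in each constructed $\bigcup\SP$-instance, and every included $\prom$-instance is consistent with the simulated transcript) made explicit where the paper leaves them implicit. No divergence or gap beyond what the paper itself glosses over.
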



We apply Theorem~\ref{promise_cv_reduction} to the following problems: 
(1) $\HH{$\SAT{1}$}_{\emptyset}^{\rf}$, 
repetition free $\HH{$\SAT{1}$}$; (2) $\HH{$\SAT{2}$}_{\emptyset}^{\rf}$,
repetition free $\HH{$\SAT{2}$}$; (3) ${\HH{$\TWOCOL$}}_\emptyset^\rf$,
repetition free $\HH{$\TWOCOL$}$; (4) $\HH{$\KWEIGHT$}_\emptyset^\rf$ the 
repetition free hidden version of the following problem. 
The problem $\KWEIGHT$ decides if a $0$-$1$ string
has Hamming weight at least $k$. Formally, we have $w=2$, $q=1$ and 
$\cc{R}=\{\{0\}\}$ and $W$ consists of words of length $\ell$
having Hamming weight $k$. An instance of $\KWEIGHT$
is a collection $(C_1,\ldots,C_m)$ of constraints of the form $x_{i_j}=0$
(formally, $C_j=\{0\}^{i_j}$).
(The string behind these constraints
is $b$ where $b_t=0$ if and only if $t\in \{i_1,\ldots,i_m\}$.)
In a repetition free instance we have $|\{i_1,\ldots,i_m\}|=m$.

Our main motivation for introducing promises on instances is to study 
the effect of prohibiting repetition of constraints.
This requirement potentially makes the hidden problem easier as it will
be indeed the case of $\SAT{1}$ (in the constraint index revealing model). 
However, it neither helps in the case of $\SAT{2}$ nor in the case
of the graph property bipartiteness ($\TWOCOL$). (We remark that
guessing a 2-coloring could be interpreted as covering the complementer
graph by two complete subgraphs and hence could also have been discussed
in the framework of the previous section. Here we show hardness of
the potentially easier, repetition-free version.)
Finally, $\KWEIGHT$ is a simple question where the impact of repetition-freeness
depends on the parameter $k$. We have the following.

\begin{theorem}
\label{rf_statements}
(a) {Repetition free $\HH{$\SAT{1}$}$} with constraint index revealing oracle is easy,
that is $\HH{$\SAT{1}$}_{\emptyset}^{\rf} \in \P$.
(b) $\HH{$\KWEIGHT$}_\emptyset$ is $\NP$-hard for certain $k$, 
but $\HH{$\KWEIGHT$}_\emptyset^\rf\in \P$ for every $k$.
(c) Repetition free $\HH{$\SAT{2}$}$, with constraint index revealing oracle,
that is, $\HH{$\SAT{2}$}_{\emptyset}^{\rf}$ is $\NP$-hard.
(d)Repetition free $\HH{$\TWOCOL$}$, that
is ${\HH{$\TWOCOL$}}_\emptyset^\rf$ is $\NP$-hard.
\end{theorem}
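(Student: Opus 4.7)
My plan is to prove all four parts by invoking the promise transfer theorem (Theorem~\ref{promise_cv_reduction}), preceded by the conversion $\HH{S}_{\emptyset}^{\rf} \equiv \HH{$\X{S}$}_{\{\cc{V}\}}^{\rf}$ from the proof of Theorem~\ref{thm:UXH}. Each part thereby reduces to a question about $\bigcup \X{S}^{\bigcup \rf}$, whose instances are collections of disjunctions of $\ell$-ary extensions of relations of $\cc{R}$, under the promise that a pairwise distinct representative can be selected from each clause.

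For (a), instances of $\bigcup \X{$\SAT{1}$}$ are CNF formulas, and the $\bigcup \rf$ promise says that one can choose literals $z_j \in C_j$ pairwise distinct; a satisfiable included rf instance further forces those literals to use pairwise distinct variables. I would build the bipartite graph joining each clause $C_j$ to every variable occurring (positively or negatively) in it and compute a maximum matching: a clause-saturating matching yields a satisfying assignment (set each matched variable to satisfy its clause, others arbitrarily), and otherwise the algorithm outputs $\exc$. Part (b) is handled in the same spirit: the rf version is decided in polynomial time by combining an SDR for the sets $I_j$ with the weight check $k \leq \ell - m$, and hardness of the unrestricted $\HH{$\KWEIGHT$}_{\emptyset}$ (for suitable $k$) follows because $\bigcup \E{$\KWEIGHT$}$ is exactly the problem of finding an independent set of size $k$ in the hypergraph with hyperedges $I_j$, which captures ordinary graph Independent Set.

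For (c), I would reduce from $\SAT{3}$. Given a 3-CNF formula with clauses $C_t = \ell_{t,1} \vee \ell_{t,2} \vee \ell_{t,3}$ on variables with indices $i_{t,k}$, introduce one fresh variable $z_t$ per clause (with index $j_t$) and form the constraint $D_t = \bigcup_{k=1}^{3} R_{\ell_{t,k}}^{(i_{t,k}, j_t)}$ of $\bigcup \X{$\SAT{2}$}$, where $R_{\ell_{t,k}}$ is $R_a$ or $R_{\neg a}$ according to the sign of $\ell_{t,k}$. Since $R_a$ and $R_{\neg a}$ ignore their second argument, the encoding is faithful; and because every $z_t$ is fresh, the three disjuncts inside $D_t$ as well as disjuncts belonging to $D_t$ and $D_{t'}$ for $t \neq t'$ use pairwise distinct pairs of variables, so $\bigcup \rf$ holds automatically. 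Hence $\SAT{3}$ reduces to $\bigcup \X{$\SAT{2}$}^{\bigcup \rf}$.

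For (d), the reduction is from 2-colorability of 3-uniform hypergraphs. A hyperedge $\{u,v,w\}$ is encoded as $R_{\neq}^{(u,v)} \cup R_{\neq}^{(u,w)} \cup R_{\neq}^{(v,w)}$, which is violated exactly when $u, v, w$ are monochromatic. To enforce $\bigcup \rf$, I would introduce fresh vertex copies $u_t, v_t, w_t$ for each hyperedge $t$ and, for each such copy, a two-edge ``equality gadget'' consisting of $R_{\neq}^{(u, h_{t,u})}$ and $R_{\neq}^{(h_{t,u}, u_t)}$ for a helper vertex $h_{t,u}$: since $\neq$ composed twice is the identity on $\{0,1\}$, this forces $u_t = u$ in every 2-coloring, so placing the main triangle constraint on the fresh copies still encodes NAE on $\{u,v,w\}$, while all disjuncts across all constraints now involve pairwise disjoint variable pairs. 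The main obstacle throughout (c) and (d) is this bookkeeping: one must verify not only that the encoding is semantically faithful, but also that the constructed instance genuinely lies in $\bigcup \rf$, which is why fresh auxiliary variables are layered on top of the bare semantic encoding.
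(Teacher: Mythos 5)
Your proposal is correct in substance and follows the paper's overall route: convert the $\emptyset$-oracle problem into the arity-extended problem with a $\{\cc{V}\}$-revealing oracle and invoke Theorem~\ref{promise_cv_reduction}, then analyse $\bigcup \X{S}$ under the promise $\bigcup\rf$. Parts (a) and (b) coincide with the paper's arguments (maximum matching/SDR plus the test $m\le \ell-k$, and $\NP$-hardness of the non-promise $\KWEIGHT$ via hitting sets, which is the same as your independent-set formulation). For (c) and (d) you use different, and valid, gadgets. In (c) the paper adds a fresh variable $y_t$ in \emph{both} polarities, producing per $3$-clause two constraints that are unions of genuine $2$-clauses $(x^{b}\vee y_t^{c})$, so every included sub-constraint is an honest, pairwise distinct $2$-clause; you instead use one constraint per clause built from $R_a,R_{\neg a}$ applied to the pair $(x_{i},z_t)$. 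This works under the paper's syntactic notion of constraint and repetition freeness (the variable pairs differ), but note that the extensions $R_a^{(i,j_t)}$ coincide as $\ell$-ary relations for different $j_t$; hence, when passing to $\X{$\SAT{2}$}$, repetition freeness of the included instance must be read through the original (relation, variable-pair) data rather than through the extended relations themselves, a delicacy the paper's two-polarity gadget avoids. In (d) the paper reduces from the non-promise $\bigcup\TWOCOL$ (hard by Theorem~\ref{c_boring}) and then makes the edge sets pairwise disjoint with two fresh vertices per edge occurrence, whereas you reduce directly from $2$-colorability of $3$-uniform hypergraphs with $\neq$-path equality gadgets; this is a correct alternative (minor wording: your disjunct pairs are pairwise \emph{distinct}, not disjoint, since gadget pairs share the helper vertex, and distinctness is what is needed). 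In both (c) and (d) you correctly verify the crucial point that a satisfying assignment of the source instance yields a \emph{satisfiable} included repetition-free instance, which is what forces the promise-solver to output an assignment rather than $\exc$.
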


\begin{proof} We prove each part of the theorem separately:

\begin{enumerate}[(a)]

\item We consider every literal as its extended $\ell$-ary relation where $n$ is the number of variables.
This transforms the $\emptyset$-oracle into a $\{\cc{V}\}$-oracle.
A repetition free instance of $\bigcup\SAT{1}$ is 
$\cc{C} = \{C_1,\ldots,C_m\}$, where each $C_j$ is a disjunction of 
literals from $\{x_1,{\overline x_1},\ldots, x_{\ell},{\overline x_{\ell}}\}$
such that there exist $m$ distinct literals $z_1,\ldots,z_m$ with $z_j$ from $C_j$.
A conjunction of literals is satisfiable, if for every $i \in [\ell]$, the literals 
$x_i$ and $\overline x_i$ are not both among them. Hence an algorithm which solves $\HH{$\SAT{1}$}_{\emptyset}^{\rf}$ 
under promise can
proceed as follows. Using a maximum matching algorithm
it selects pairwise different variables $x_{i_1},\ldots,x_{i_m}$ such that $x_{i_j}$ or $\overline x_{i_j}$ is in $C_j$.
If such
a selection is not possible it returns $\exc$. Otherwise
it can trivially find a satisfying assignment.

\item Again, we extend the relations to their $\ell$-ary
counterparts so that the $\emptyset$-oracle is transformed 
into a $\{\cc{V}\}$-oracle. 

An instance of $\bigcup \KWEIGHT$ is
$\cc{C'}=\{C_1',\ldots,C_m'\}$, where there
exist subsets $S_1,\ldots,S_m$ of $[\ell]$
such that the relation for $C_j$ is the set
$\{a\in \ALPHA{w}^{\ell}: a_i=0 \mbox{~for some~}i\in S_j\}$.
Finding a satisfying instance of $\bigcup \cc{R}$
is therefore equivalent to finding a hitting set (a transversal)
of size (at most) $\ell-k$ for the hypergraph $\{S_1,\ldots,S_m\}$.
This problem is $\NP$-hard for, say, $0.01\ell<k<0.99\ell$.

A $\KWEIGHT^\rf$-instance included
in an instance of $\bigcup \KWEIGHT^{\bigcup \rf}$
corresponding to subsets $S_1,\ldots,S_m$ 
consists of constraints $x_{i_j}\neq 0$ for
$m$ different indices $i_1,\ldots,i_m$
with $i_j\in S_j$. Obviously, such
a set of constraints is satisfiable by an
element of $W$ if and only if $m\leq \ell-k$. These
observations immediately give the following efficient solution
under promise for $\bigcup \KWEIGHT^{\bigcup \rf}$. 
If $m>\ell-k$ we return \exc. Otherwise, using a maximum
matching algorithm we find $m$ different places $i_1,\ldots,i_m$
with $i_j\in S_j$ (which must exist by the promise)
and return an assignment from $W$ which can be found 
in an obvious way. 

\item To work in the framework
of a $\{\cc{V}\}$-oracle rather than a $\emptyset$-oracle,
we consider every clause as its extended $n$-ary relation where 
$n$ is the number of variables.
This transforms the $\emptyset$-oracle into a $\{\cc{V}\}$-oracle.
We reduce $\SAT{3}$ to $\bigcup \SAT{2}^{\bigcup \rf}$
as follows. Let $\phi=\bigwedge_{j=1}^m C_j$
be a $3\!-\!\SF{CNF}$ where 
$$C_j= x_{j_1}^{b_1}\vee x_{j_2}^{b_2} \vee x_{j_3}^{b_3}.$$
(Here $b_i \in \{0,1\}$ and $x^1$ denotes $x$, $x^0$ stands for $\overline x$.)
For each $j = 1,\ldots,m$ we introduce a new variable $y_j$.
We will have $2m$ new clauses: 
$$
C_j'=x_{j_1}^{b_1}\vee x_{j_2}^{b_2} \vee x_{j_3}^{b_3} \vee y^{0}_j
\mbox{~~and~~}
C_j''=x_{j_1}^{b_1}\vee x_{j_2}^{b_2} \vee x_{j_3}^{b_3}
\vee y^{1}_j$$
for each $j$.
Put $\phi'=\bigwedge_{j=1}^m(C_j'\wedge C_j'').$
Then $\phi$ is satisfiable if and only if $\phi'$ is satisfiable.
In fact, there is a $1$ to 	$2^m$ correspondence
between the
assignments satisfying $\phi$ and those satisfying $\phi'$:
only the values assigned to the first $\ell$ variables matter.
Also, the included constraints $(x_{j_1}^{b_1}\vee y^{1}_j)$
and $(x_{j_1}^{b_1}\vee y^{0}_j)$ for all $j=1,\ldots,m$ form
a system of $2m$ different $2\!-\!\SF{CNF}$s. Furthermore, if $\phi'$ is satisfied by
an assignment then we can select a satisfiable system of $2m$ pairwise distinct
sub-constraints: for each $j$ we pick $s\in\{1,2,3\}$ such that
$x_{j_s}^{b_s}$ is evaluated to $1$ and take
$(x_{j_s}^{b_s}\vee y^{1}_j)$ and
$(x_{j_s}^{b_s}\vee y^{0}_j)$ for $j=1,\ldots,m$.

\item Here the alphabet is $\ALPHA{2}=\{0,1\}$, $q=2$, $\cc{R}$
has one element ``$\neq$", that is $\{(1,0),(0,1)\}$.
An instance of $\TWOCOL^\rf$ consists of a set of constraints
of the form $x_u\neq x_v$ for $m$ pairwise distinct unordered
pairs $\{u,v\}$ from $\{1,\ldots,\ell\}$  (corresponding to
the edges of a graph). (Here we again work in the context of 
the extensions of the relation ``$\neq$" to arity $\ell=n$.)

An instance of $\bigcup \TWOCOL$ is a collection
$\{C_1,\ldots,C_m\}$, where each $C_j$ is
a disjunction of constraints of the form $x_u\neq x_v$.
In an equivalent view, an instance of $\bigcup \TWOCOL$ can be described
by the collection of edge sets (graphs)
$E_1,\ldots,E_m$  on vertex set $[n]$ 
and a satisfying assignment can be described
by a coloring $c:\{1,\ldots,n\}\rightarrow \{0,1\}$ such that for every $j$ 
there exists an edge $e_j\in E_j$ with endpoints having different colors. 
It is clear that if the edge sets $E_1,\ldots,E_m$ are disjoint then
the instance is repetition free and the solutions under promise 
coincide with the solutions in the normal sense.

Let $E_1,\ldots,E_m$ be edge sets describing an instance of 
$\bigcup \TWOCOL$. Put $s_j=|E_j|$. For each $j$ we introduce $2s_j$ 
new vertices: $uvj1,uvj2$ for each $\{u,v\}\in E_j$,
$2s_j$ new one-element edge sets
$E_{uvj1}=\{\{u,uvj1\}\}$ and
$E_{uvj2}=\{\{v,uvj2\}\}$; while 
$E_j$ is replaced with an edge $E_j'$ set consisting of
$s_j$ edges: 
$\{uvj1,uvj2\}$ for each $\{u,v\}\in E_j$. 
It turns out that the $\bigcup \TWOCOL$ 
problem on the $n+2\sum_{j=1}^m s_j$ vertices
with the new $m+2\sum_{j=1}^m s_j$ edge sets
is equivalent to the original one and
solutions of the two problems can be easily (and efficiently)
mapped to each other. The new edge sets are
pairwise disjoint and hence the repetition free version
of the new $\bigcup \TWOCOL$ problem is the same as the
the non-promise version. 

Theorem~\ref{c_boring} shows that 
non-promise $\bigcup{\TWOCOL}$ is $\NP$-hard.
By the reduction above, so is its repetition free
version. \qedhere
\end{enumerate}
\end{proof}

\paragraph{On group isomorphism.} Isomorphism of a hidden multiplication table with
a given group, a problem discussed in \cite{BCZ13},
can also be cast in the framework of promise $\HCSP$s. We consider
the following problem $\GROUPEQ$ (equality with a group from a class).
Let $\cc{G}$ be a family of groups on the set $[k]$,
that is, a set of multiplication tables
on $[k]$ such that each multiplication table defines
a group. The task is to decide whether a hidden group
structure $b(\,,\,)$ is equal to some $a(\,,\,)$ from $\cc{G}$
and if yes, find such an $a(\,,\,)$. (Note that a solution of the latter 
task will give the whole table for $b(\,,\,)$.)

 
We define $\GROUPEQ(\cc{G})$ as a promise $\CSP$ as follows.
First we consider the $\CSP$ $\ENT(\cc{G})$ with the following 
parameters and type. We have $w=k$, $W=\cc{G}$, 
$\cc{R}=\{\{w\}:w\in [k]\}$,
$\ell=k^2$. It will be convenient
to consider assignments as $k\times k$ tables
with entries from $[k]$, that is, functions $[k]^2\rightarrow [k]$.
(Implicitly, we use a bijection between the index set
$\{1,\ldots,\ell\}$ and $[k]^2$.)
The number of constraints is $m=k^2$ and an 
instance is a collection  
$x_{(u_h,v_h)}=b_h$, where $h=1,\ldots,m$, and $u_h, v_h, b_h\in[k]$.
Thus the assignment satisfying $\ENT(\cc{G})$ are $k\times k$ 
multiplication tables from $\cc{G}$
which have prescribed values at $k^2$ (not necessarily distinct)
places.

We say that an instance for $\ENT(\cc{G})$ belongs to the 
promise $\GROUP$ if two conditions are satisfied.
Firstly, there is one constraint for
the value taken by each place. Formally,
the map $\tau:h\mapsto (u_h,v_h)$ is a bijection 
between $\{1,\ldots,m\}$ and $[k]^2$. As a consequence,
by setting $b(u,v):=b_{\tau^{-1}(u,v)}$, we have a constraint
$x_{u,v}=b(u,v)$ for pair $(u,v)\in [k]^2$. 
The second -- essential -- condition is that
the multiplication given by $b(\,,\,)$
defines a group structure on $[k]$. 
The promise problem $\GROUPEQ(\cc{G})$ is the problem
$\ENT(\cc{G})^{\GROUP}$. 

We consider the promise problem
$\HH{$\ENT(\cc{G})$}_{\{V\}}^\GROUP$ 
(which we denote by $\HH{$\GROUPEQ(\cc{G})$}$ for short)
and the corresponding problem
$\bigcup \ENT(\cc{G})^{\bigcup \GROUP}_{\{V\}}$
(short notation: $\bigcup \GROUPEQ(\cc{G})$).
In this $\HCSP$, if $a(\,,\,)$ is different from
$b(\,,\,)$, the oracle reveals
a pair $(u,v)$ such that $a(u,v)\neq b(u,v)$.
 
We note here that the case of $\HH{$\GROUPEQ(\cc{G})$}$
where $\cc{G}$ consists of all isomorphic copies 
of a group $G$ in fact covers the problem of finding
an isomorphism with $G$ discussed in \cite{BCZ13}. For that problem, 
the input to the verification oracle is a bijection 
$\phi:[k]\rightarrow G$. Recall that $b(\,,\,)$ encodes the hidden group 
structure, and we assume $G$ is specified by the binary relation $g(\,,\,)$. Then, 
in the case when $\phi$ is not an isomorphism, the oracle has to
reveal $u,v\in [k]$ such that, the product $g(\phi(u),\phi(v))$ 
does not equal $\phi(b(u,v))$ in $G$. Indeed, given
$\phi$ we can define (and even compute) the multiplication
$a_{\phi}(\,,\,)$ on $[k]$ -- by taking $a_\phi(x,y)=
\phi^{-1}(g(\phi(x),\phi(y))$ -- such that $\phi$ becomes
an isomorphism from the group given by $a_{\phi}(\,,\,)$
to $G$. Then $\phi$ is an isomorphism from the group
given by $b(\,,\,)$ if and only if 
$a_{\phi}(\,,\,)=b(\,,\,)$. Furthermore, if it is not the case
then the oracle given in \cite{BCZ13} reveals a pair $(u,v)$ such that
$a_{\phi}(u,v)=b(u,v)$, exactly what is expected from
a revealing oracle for $\HH{$\GROUPEQ(\cc{G})$}$. 

An instance of 
$\bigcup \ENT(\cc{G})$
consists of $k^2$ constraints
expressing that $a(u_h,v_h)\in S_h$ where $S_h\in
2^{[k]}\setminus \emptyset$ for $h\in[m]=[k^2]$.
An instance of the promise version 
$\bigcup \GROUPEQ(\cc{G})$ 
(which is equal to $\bigcup \ENT(\cc{G})^{\bigcup \GROUP}_{\{V\}}$)
should satisfy that
$\{(u_h,v_h):h=1,\ldots,m\}=[k]^2$, that is, our constraints
are actually $x_{(u,v)}\in S(u,v)$ for
a map $S(\,,\,):[k]^2\rightarrow 2^{[k]}$. Furthermore,
there is a map $b(\,,\,):[k]^2\rightarrow [k]$ with $b(u,v)\in S(u,v)$
for every $(u,v)\in [k]^2$ such that $b(\,,\,)$ gives a group
structure.

Now we are ready to reprove Theorem 11 in \cite{BCZ13}. Note that our proof is 
considerably shorter than the original proof. 

\begin{theorem}
The problem $\HH{$\GROUPEQ(\cc{G})$}$ is $\NP$-hard.
\end{theorem}
\begin{proof}
Let $p$ be a prime. We show that finding Hamiltonian cycles in Hamiltonian digraphs of size $p$ 
is reducible in polynomial time to $\HH{$\GROUPEQ(\cc{G})$}$. The former problem is $\NP$-hard,
since an algorithm that in polynomial time finds a Hamiltonian cycle in a Hamiltonian digraph obviously
can decide if an arbitrary digraph $G$ has a Hamiltonian cycle: it just runs on $G$ and then tests if the outcome is indeed
a Hamiltonian cycle.

Choose $\cc{G}$ as the set of all
group structures on $[p]$. As every group having $p$
elements is isomorphic to $Z_p$,  $\cc{G}$ coincides
with the group structures on $[p]$ isomorphic to $Z_p$. 
We essentially translate the arguments 
given in \cite{BCZ13} to the setting of $\bigcup \GROUPEQ$ as follows. This 
suffices to prove the statement due to the transfer theorem stated in Theorem~\ref{promise_cv_reduction}.

Let $([p],E)$ be a Hamiltonian directed graph (without loops) on $[p]$. 
Fix $z\in [p]$. For $u\in [p]$, let $S(u,z)=\{v:(u,v)\in E\}$, and 
$S(u,v)=[p]$ for $v\neq z$. Let $\phi:[p]\rightarrow \{0,\ldots,p-1\}=Z_p$ 
be a bijection that defines a Hamiltonian cycle in $([p],E)$. Then
$b(x,y)=a_\phi(x,y):=\phi^{-1}(\phi(x)+\phi(y))$ gives a group structure
on $[p]$ (isomorphic to $Z_p$ via $\phi$) consistent
with the constraints given by $S(\;,\;)$.
Conversely, if $b(\,,\,)$ gives a group structure 
(necessarily isomorphic to $Z_p$) consistent
with $S(\;,\;)$ then the pairs $(u,b(u,z))$ $(u\in [p])$
form a Hamiltonian cycle in $([p],E)$. Thus finding
Hamiltonian cycles in Hamiltonian digraphs
on $p$ points can be reduced to $\bigcup \GROUPEQ$ 
on $p$ elements.
\end{proof}
As an example, suppose we have $p=3$, and the edges are $2\to 1, 1\to 3, 3\to 2$.  
Then set 
$z=2$, so $\phi(2)=1$, $\phi(1)=2$, $\phi(3)=0$. (That is, $i$ is the $\phi(i)$th 
vertex to be visited in this Hamiltonian cycle, where $\phi(i)$ should be 
understood as modulo $p$.) It can be verified that $b(x, 
2)\in S(x, 2)=\{y: (x, y)\in E\}$. On the other hand, if we set $b(x, y)$ to be 
isomorphic to $Z_p$ by the correspondence just given by $\phi$, then the path $(u, 
b(u, 2))$ forms a Hamiltonian cycle. 









\medskip

\noindent{\bf Acknowledgements.}
The authors are grateful to the anonymous reviewers for their helpful
remarks and suggestions. 
This research  was  supported in part by 
the Hungarian National Research, Development and Innovation Office -- NKFIH
(Grant K115288), the Singapore Ministry of Education and the 
National Research Foundation, also through the Tier 3 Grant 
``Random numbers from quantum processes" MOE2012-T3-1-009,
by the European Commission
IST STREP project Quantum Algorithms (QALGO) 600700, the French ANR Blanc Program 
under contract ANR-12-BS02-005 (RDAM project), and Australian Research Council 
DECRA DE150100720.
\bibliographystyle{alpha}
\bibliography{ref}
\newpage
\appendix

\end{document}